\title{Summarizing Diverging String Sequences, with Applications to Chain-Letter Petitions}
\titlerunning{Summarizing Diverging String Sequences}
\author{Patty Commins}{%
  Department of Computer Science, Carleton College \and
  Department of Mathematics, University of Minnesota}%
{commins.patty@gmail.com}{}{}
\author{David Liben-Nowell}{%
  Department of Computer Science, Carleton College}
{dln@carleton.edu}{}{}
\author{Tina Liu}{%
  Department of Computer Science, Carleton College \and
  Surescripts}%
{tina.jxy.liu@gmail.com}{}{}
\author{Kiran Tomlinson}{%
  Department of Computer Science, Carleton College \and
  Department of Computer Science, Cornell University}%
{kt@cs.cornell.edu}{}{}
\authorrunning{P.~Commins, D.~Liben-Nowell, T.~Liu, and K.~Tomlinson}
\keywords{edit distance, tree reconstruction, information propagation, chain letters}
\DeclareMathOperator{\medoid}{\ensuremath{\mathsf{medoid}}}
\DeclareMathOperator{\error}{err} 
\DeclareMathOperator{\cost}{\mathcal{C}}
\newcommand{\edg}{\ensuremath{\mathsf{EDG}}}
\newcommand{\ed}{\ensuremath{\mathsf{ED}}}
\newcommand{\ad}{\ensuremath{\mathsf{AED}}}
\newcommand{\buildbifurcation}{\textsc{Build\-Bifurcation}\xspace}
\newcommand{\buildtree}{\textsc{BuildTree}\xspace}
\def\set#1{\ensuremath{\{ #1 \}}}
\def\tup#1{\ensuremath{\langle #1 \rangle}}
\def\eps{\ensuremath{\varepsilon}}
\def\L#1#2{\ensuremath{\mathsf{labelseq}_{#1}(#2)}}
\def\comment#1{\em \small // #1}
\begin{document}
\maketitle
\begin{abstract}

  Algorithms to find optimal alignments among strings, or to find a parsimonious summary of a collection of strings, are well studied in a variety of contexts, addressing a wide range of interesting applications.  In this paper, we consider \emph{chain letters,} which contain a growing sequence of signatories added as the letter propagates.  The unusual constellation of features exhibited by chain letters (one-ended growth, divergence, and mutation) make their propagation, and thus the corresponding reconstruction problem, both distinctive and rich.  Here, inspired by these chain letters, we formally define the problem of computing an optimal summary of a set of diverging string sequences.  From a collection of these sequences of names, with each sequence noisily corresponding to a branch of the unknown tree $T$ representing the letter's true dissemination, can we efficiently and accurately reconstruct a tree $T' \approx T$?  In this paper, we give efficient exact algorithms for this summarization problem when the number of sequences is small; for larger sets of sequences, we prove hardness and provide an efficient heuristic algorithm.  We evaluate this heuristic on synthetic data sets chosen to emulate real chain letters, showing that our algorithm is competitive with or better than previous approaches, and that it also comes close to finding the true trees in these synthetic datasets.  

\end{abstract}
   
\newpage
\section{Introduction}
\label{sect:intro}

In a range of computational settings, we are given a collection of strings and asked to construct some kind of parsimonious representation of the given set.  The task becomes more interesting if the strings result from a generative process, especially if new strings arise from a mechanism involving both replication and mutation of old strings.  Now the parsimonious representation might be a \emph{tree} describing the generative history of this population, with nodes
corresponding to the strings and the branching structure representing the evolutionary events that produced that population.  
At this level of description,
a host of applications fall under this rubric:  reconstructing a phylogeny from a set of genes, tracing the spread of a textual meme in a social network, inferring the version history of a document from its many copies.  These domains differ in the way that replication and mutation occur---sometimes randomly (``by nature'') and sometimes intentionally by humans embedded in a social structure---and, perhaps, whether some kind of selective pressure affects which strings survive or replicate.

Here, we consider a specific---and surprisingly rich---social setting in which a population of strings is generated:  \emph{chain letters.}  Chain letters often feature an outlandish claim (``send a copy of this letter to ten friends, or you will have bad luck forever!''), but, more crucially, recipients are instructed to add their names to the document's end, make a copy, and send those copies to multiple friends.  Importantly, every subsequent recipient may modify any part of the document, or copy it imprecisely; thus each document contains a list of signatures, representing an ordered (if noisy) trace of its particular path through the social network.  In the present work, we study the problem of accurately reconstructing the underlying propagation tree from a set of signature lists.  If the lists of signatures contained no errors, the problem would be trivial, but errors abound, including both point mutations and structural variants.  (In real email-based chain-letter data, some signatories retyped names, often incorrectly, and both block deletions and duplications appear~\cite{LNK08}. Worse, some copies of the emails are only available as low-quality scanned images,
introducing further errors.)

\subparagraph*{The propagation of chain letters.}

There are three crucial properties in chain-letter--like contexts that, together, make this data
intriguingly different from other settings:
\begin{romanenumerate}
\item \emph{chain letters grow (at one end).}  A document has an ``active end,'' and a document typically changes via the deposition of additional text (another name) at its active end.
  
\item \emph{chain letters diverge.}  A document can \emph{split} to create multiple ``children'' documents, which share a prefix up to the split but have differing suffixes below.  The split is at the active end; two documents that diverge grow independently after the branching point.

\item \emph{chain letters mutate.}  Actors introduce noise: an individual sending a chain letter to a friend makes a (potentially imperfect) \emph{copy} of that document, possibly introducing errors---and those errors are ``inherited'' by subsequent copies of the letter.
\end{romanenumerate}
Given a collection of many copies of ``the same'' chain letter, each with its own sequence of names, one can seek to reconstruct the underlying true record of the propagation---both the structure of the propagation tree \emph{and} the strings representing the true names of the signatories.  Together, the above properties make this chain-letter reconstruction problem a tantalizing domain for parsimonious reconstruction:  as the rate of noise in document copying increases, naturally the reconstruction problem becomes difficult, but there is a great deal of repetition in the input data, particularly near the root of the propagation tree.

\subparagraph*{The present work.}
We formally introduce the \emph{Diverging String Sequence Summarization Problem (DSSSP):}   given a collection $X$ of sequences of strings (strings correspond to names, and each sequence is a noisy list of names in an instantiation of a chain letter), we seek a tree $T$ that optimally summarizes $X$.  (Rather than using the language of chain letters, we will abstract away the particular application and discuss diverging string sequences in general.)%
\footnote{There is another layer of complication, literally:  rather than viewing a document as a sequence of \emph{characters,} we instead view it as a sequence of \emph{signatures} (each of which is a string that consists of a sequence of characters).  Thus there is a ``two-level'' view of edits, in which either an individual character can be corrupted (a single character-level edit within a particular signature) or an individual signature can be corrupted (an entire signature is deleted, inserted, or replaced by a different signature).} %
What counts as an ``optimal'' summary depends on a tradeoff between two competing goods:  the \emph{accuracy} of $T$ in representing the strings in the given sequences, and the \emph{efficiency} of $T$ in representing the given string sequences without too much redundancy.  Our formal definition of the problem is parameterized to reflect the tradeoff between these two competing goods.

Our main theoretical results on DSSSP are (1) an efficient optimal algorithm for the case of $m = 2$ sequences, based on an approach we call \emph{edit distance with give-up} (Theorem~\ref{thm:optimality-of-two-string-reconstruction}); (2) a proof of hardness for large $m$ (Theorem~\ref{thm:hardness-of-reconstruction}); and (3) an exact polynomial-time algorithm for any fixed value of $m$ (Theorem \ref{thm:dsssp_is_fixed_parameter_tractable}).  We also give a much more efficient heuristic algorithm for large $m$---using a combination of divergence-aware pairwise alignment and iterative merging, inspired by progressive alignment algorithms~\cite{FD87}---and show empirically that it does a good job of reconstructing synthetically generated trees.

\section{Related Work}

\subparagraph*{Chain-letter data.}  

In joint work with Jon Kleinberg, the second author studied the propagation of a widespread email-based anti-war petition
~\cite{LNK08}.  This work focused on the topological structure of the underlying propagation tree; subsequent research sought to explain the shape of the tree through stochastic branching processes~\cite{GJ10} or the rarity of sampled email copies~\cite{CLNK11}.  The present work differs in that here we study the problem of accurately reconstructing the propagation tree from signature lists, rather than seeking to understand the structure of that tree.  Still, examining the structure of the propagation tree presupposes a reconstructed tree, which in~\cite{LNK08} was done 
using a hard edit distance cutoff to decide whether two signatures belong to the same signatory.  (See Section~\ref{sect:comparing-reconstruction-to-ground-truth}.)
This specific aspect of our problem---do multiple signatures belong to the same signatory?---has been considered in other forms in the past, including error-tolerant recognition of strings 
with various error models~\cite{oflazer1996error-tolerant,brill2000an_improved_error_model}, error correction of strings of regular languages~\cite{wagner1974order-n}, and block edit models for approximate string matching~\cite{lopresti1997block}, all of which use various versions of edit distance.

Chain letters in paper form have also been investigated in the context of
constructing a phylogeny based on variations in the text of the document itself (rather than a list of signatories)~\cite{bennett2003chain}, or the propagation of stories as a network~\cite{karsdorp2016structure}.

\subparagraph*{Other forms of propagation.}  
In rare cases, a situation matching all three key features of chain letters has been studied---including a (controversial) model of the origin of life, based on layered clay accreting over time and even diverging and mutating~\cite{cairns1990seven,bullard2007test}.
More common settings share two of the three features.
For example, absent any errors, our reconstruction task is solved by a trie~\cite{DeLaBriandais1959, Fredkin1960} summarizing a set of diverging strings.  Online conversations~\cite{KMM10:dynamics_of_conversations} (e.g., comment threads or especially email threads) have an active end at which new contributions appear, and threads can diverge, but there is no obvious notion of mutation.

There are also applications in which the objects of interest are strings that grow at one end, with noise but without meaningful divergence.  In dendrochronology (the science of dating wood), approaches based on edit distance can be used to study sequences of growth rings in trees, which accumulate on one end, adjacent to the bark~\cite{wenk1999applying}.

By far, though, the best-studied settings that match two of our three features have strings that mutate and replicate, but have no ``active end'' at which growth occurs.  This is the classical setting of phylogenetic reconstruction, but it also appears in many other contexts.  Most prominent is the spread of news, memes, and rumors that evolve as versions are created and shared~(e.g., \cite{ALAN:wsdm16:info-evo-in-soc-nets,leskovec2009meme,simmons2011memes,horta2019message,friggeri2014rumor}). Mutations in these cases differ from ours, though, in that the content of the information being spread can affect the type of mutations that occur, thereby affecting the likelihood of further propagation (and therefore the structure of the tree).  Much of this work
seeks to understand various types of dissemination and what factors may impact the propagation structure---different from the goal of reconstructing the underlying tree.  Reconstruction of the evolutionary history of a collection of divergent objects is also well studied in a bafflingly wide variety of contexts, from version histories of code snippets in Stack Overflow~\cite{baltes2019sotorrent},
to variations of the story ``Little Red Riding Hood''~\cite{tehrani2013phylogeny}, to 
diverging cultural histories using textile data~\cite{matthews2011testing}.

\subparagraph*{Reconstruction algorithms.}
In addition to the algorithmic approaches to these various other forms of data, there is a voluminous literature on string alignment in the computational biology literature.  The multiple sequence alignment problem is closely related to DSSSP, and many algorithms target a variety of challenges related to it~(see \cite{li1992a_survey,sankoff1975minimal,waterman1976some,raphael2004novel,carrillo1988multiple,kececioglu1993maximum}, among many others). There is also work involving the alignment of amino acid sequences to reconstruct the history of proteins, including mutations and divergence events~\cite{doolittle1992reconstructing}.


\section{Summarizing Diverging String Sequences}
\label{sect:the-problem}

Before we formally define our abstract problem, we begin with some intuition, with terminology drawn from chain letters.  Informally, a \emph{name} is a string over a finite alphabet, and a \emph{petition} is a sequence of names.  We are given a \emph{set} of petitions $X = \set{x_1, \ldots, x_m}$, and we seek the tree $T$ that best summarizes the set $X$.  But the ``best'' tree depends on a tradeoff between two competing goods:  (i) the efficiency of $T$ (its number of nodes), and (ii) the accuracy of $T$ in representing the petitions in $X$.

Consider petitions $x_1 = \text{\Verb|Aaa Bbb Ccc Ddd Eee|}$ and 
$x_2 = \text{\Verb|Aaa Bbx Ccc Dxx Fff|}$, with spaces separating names, as an example.
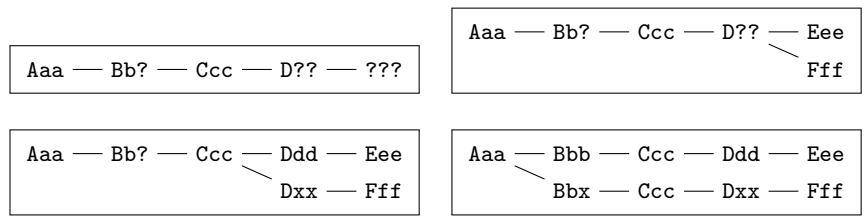
\begin{figure}[b]
  \begin{center}\def\ttt{\tt \scriptsize}%
\tikzset{every picture/.style={baseline=(current bounding box.north),
    xscale=0.45,every node/.style={inner sep=1pt,outer sep=1pt}}}%
%
\def\ttt{\tt}%
\tikzset{every picture/.style={xscale=0.75}}
%
\begin{tabular}{ll}
\fbox{\begin{tikzpicture}
\node (v1) at (-4,1.5) {\ttt Aaa};
\node (v2) at (-2.5,1.5) {\ttt Bb?};
\node (v3) at (-1,1.5) {\ttt Ccc};
\node (v4) at (0.5,1.5) {\ttt D??};
\node (v5) at (2,1.5) {\ttt ???};
\draw (v1) edge (v2);
\draw (v2) edge (v3);
\draw (v3) edge (v4);
\draw (v4) edge (v5);
\end{tikzpicture}}
&
\fbox{\begin{tikzpicture}
\node (v1) at (-4,1.5) {\ttt Aaa};
\node (v2) at (-2.5,1.5) {\ttt Bb?};
\node (v3) at (-1,1.5) {\ttt Ccc};
\node (v4) at (0.5,1.5) {\ttt D??};
\node (v5) at (2,1.5) {\ttt Eee};
\node (v9) at (2,1) {\ttt Fff};
\draw  (v1) edge (v2);
\draw  (v2) edge (v3);
\draw  (v3) edge (v4);
\draw  (v4) edge (v5);
\draw  (v4) edge (v9);
\end{tikzpicture}}
\\~\\ 
\fbox{\begin{tikzpicture}
\node (v1) at (-4,1.5) {\ttt Aaa};
\node (v2) at (-2.5,1.5) {\ttt Bb?};
\node (v3) at (-1,1.5) {\ttt Ccc};
\node (v4) at (0.5,1.5) {\ttt Ddd};
\node (v5) at (2,1.5) {\ttt Eee};
\node (v8) at (0.5,1) {\ttt Dxx};
\node (v9) at (2,1) {\ttt Fff};
\draw  (v1) edge (v2);
\draw  (v2) edge (v3);
\draw  (v3) edge (v4);
\draw  (v4) edge (v5);
\draw  (v3) edge (v8);
\draw  (v8) edge (v9);
\end{tikzpicture}}
&
\fbox{\begin{tikzpicture}
\node (v1) at (-4,1.5) {\ttt Aaa};
\node (v2) at (-2.5,1.5) {\ttt Bbb};
\node (v3) at (-1,1.5) {\ttt Ccc};
\node (v4) at (0.5,1.5) {\ttt Ddd};
\node (v5) at (2,1.5) {\ttt Eee};
\node (v6) at (-2.5,1) {\ttt Bbx};
\node (v7) at (-1,1) {\ttt Ccc};
\node (v8) at (0.5,1) {\ttt Dxx};
\node (v9) at (2,1) {\ttt Fff};
\draw  (v1) edge (v2);
\draw  (v2) edge (v3);
\draw  (v3) edge (v4);
\draw  (v4) edge (v5);
\draw  (v1) edge (v6);
\draw  (v6) edge (v7);
\draw  (v7) edge (v8);
\draw  (v8) edge (v9);
\end{tikzpicture}}
\end{tabular}

  \caption{Four ``best'' trees for $x_1 = \text{\Verb|Aaa Bbb Ccc Ddd Eee|}$ and $x_2 = \text{\Verb|Aaa Bbx Ccc Dxx Fff|}$.}
  \label{fig:best-trees-informally}
\end{figure}
Depending on the relative importance of efficiency and accuracy, there are \emph{four} distinct ``best'' trees (see Figure~\ref{fig:best-trees-informally}):  a trivial tree that never diverges (if efficiency matters much more than accuracy); a tree that diverges upon any textual discrepancy (if accuracy matters much more); or two intermediate trees that diverge after the \Verb|Ccc|s or the \Verb|D??|s (depending on the cost--benefit of adding one node vs.~paying for two textual errors).  

\subsection{Distance between a Summary Tree and a String Sequence}

To begin, we need to quantify how accurately a set $X$ of string sequences is represented by a summary structure---and, more fundamentally, what it means to summarize $X$.
\begin{definition}[Labeled Summary Tree]
  Let $X$ be a set of string sequences.  A \emph{labeled summary tree} of $X$ is a pair $\tup{T,f}$, where $T$ is a tree with each node labeled with a string, and $f$ is a function mapping each $x \in X$ to a node $v_x$ in $T$.  

  Let $\L{T}{v_x}$ denote the sequence of node labels on the path from the root of $T$ to $v_x$.
\end{definition}
That is, a summary of $X$ consists of a labeled tree $T$, with a node of $T$ designated to correspond to each sequence in $X$.
To assess how accurately a string sequence $x \in X$ is represented, we will compare $x$ with $\L{T}{v_x}$.  Our metric will be a variation on the classical Levenshtein edit distance~\cite{Lev66}, with two adjustments:
\begin{enumerate}
\item Edit distance is usually defined between two strings, but we wish to compare two \emph{sequences} of strings. We cannot simply concatenate the strings within each sequence and then use standard edit distance, as the edits would no longer respect string boundaries. As such, we need to define an edit distance with two levels of granularity.

\item We insist that every string in each sequence $x \in X$ be represented (perhaps with some error) in the tree.  Thus, when we align $x$ to its path in the tree, we do not allow deletion of strings in the sequence.  (We forbid deletions from $x \in X$ to preserve the intuition that the optimal summary tree for a singleton sequence $X = \set{x}$ is a nonbranching path successively labeled by the strings in $x$ \emph{even when the cost of nodes is very high}.)  
\end{enumerate}
Let $x$ be a string sequence, and let $y = \L{T}{v_x}$.  Our metric, then, is an asymmetric two-level variant of the edit distance between $x$ and $y$, allowing deletions from $y$ but not $x$:
\begin{definition}[Asymmetric Edit Distance]
  Let $x$ and $y$ be string sequences.  The \emph{asymmetric edit distance} of $x$ with respect to $y$, denoted $\ad(x,y)$, is the cost of the cheapest sequence of operations transforming $x$ into $y$, where the allowable operations are inserting a string into $x$ and substituting a string.  (No string can be deleted from $x$.)

  These operations' costs are given by (classical) edit distance $\ed$: substituting $w'$ for $w$ costs $\ed(w,w')$; inserting a string $w$ into $x$ costs $\ed(w, \varepsilon)$, where $\varepsilon$ is the empty string.  (Unless otherwise specified, all $\ed$ edits have unit cost, but we allow arbitrary cost matrices.)
\end{definition}

We compute $\ad(x,y)$ with a variation on the classical dynamic program for edit distance,
forbidding deletions 
and using $\ed$ to compute the cost of inserting or substituting a string.  

The \emph{distance} between a string sequence $x$ and summary tree $\tup{T,f}$, then, is given by $\ad(x, \L{T}{f(x)})$---i.e., the asymmetric edit distance between $x$ and the label sequence on the path from root to the node corresponding to $x$ in $T$.

\subsection{The Problem:  Summary Trees for a Set of String Sequences}
We can now formally define our problem, where our objective function is---in the style of regularization in machine learning~\cite{hastie2009elements}---a weighted sum of the accuracy of the summary tree (as measured by \ad) and the simplicity of the tree (as measured by its number of nodes):

\begin{definition}[Diverging String Sequence Summarization Problem {[DSSSP]}]~
  \label{def:dsssp}
  \begin{description}
  \item[Input:] A set $X = \set{x_1, x_2, ..., x_m}$ of string sequences and a nonnegative \emph{node cost} $\lambda$.
  \item[Output:] A labeled summary tree $\tup{T,f}$ (i.e., a tree $T$ and a function $f$ mapping each $x_i$ to a node $v_i$ in $T$) minimizing the following, where $|T|$ denotes the number of nodes in $T$:
    \begin{equation}\label{eq:error}
      \error_{\lambda}(T) :=
       \Big[\sum\nolimits_{i = 1}^{m} \ad(x_i, \L{T}{v_i})\Big] + \lambda \cdot |T|.
    \end{equation}
  \end{description}
\end{definition}
To ensure that $T$ is a tree with a single root, we place sentinel values at the start of each sequence $x_i$.  (Denote by $|T|$ the number of \emph{non-sentinel} nodes in $T$.)
Note that $\ad(x_i, \L{T}{v_i})$ is defined only if $|x_i| \le |\L{T}{v_i}|$---that is, the depth of the node $v_i$ in $T$ is at least the number of strings in the sequence $x_i$---as it would otherwise be impossible to convert $x_i$ into $\L{T}{v_i}$ without deleting strings.

The parameter $\lambda$ controls the tradeoff between trees that represent the input sequences accurately and trees that provide more concise summaries of the input set.
When $\lambda = 0$, a trivial branch-immediately tree $T$ is optimal; when $\lambda = \infty$, a trivial never-branching tree of depth $\max_i |x_i|$ is.  Intermediate values of $\lambda$ give more interesting structures. See 
Figure~\ref{fig:tree_and_petitions}.

\begin{figure}
    \centering
\def\s{\scriptsize}%
{\scriptsize
  \begin{subfigure}[b]{0.15\textwidth}
    \begin{minipage}[b]{\linewidth}\begin{tabular}[t]{@{}c@{~}c@{~}c@{\hspace*{-1pc}}}
                                            $x_1$ & $x_2$ & $x_3$ \\
       \begin{tabular}[t]{|@{\hskip 1pt}l@{\hskip 1pt}|}\hline \s Alice\\\s Bot\\\s Carol\\\s Eve\\\hline\end{tabular}&
       \begin{tabular}[t]{|@{\hskip 1pt}l@{\hskip 1pt}|}\hline \s Alice\\\s Bob\\\s Carl\\\s Frank\\\hline\end{tabular}&
       \begin{tabular}[t]{|@{\hskip 1pt}l@{\hskip 1pt}|}\hline \s Alyce\\\s Bob\\\s Dan\\\hline\end{tabular}\\
    \end{tabular}\hspace*{-0.5pc}  
  \end{minipage}
  \caption{\raggedright Three string sequences to summarize.}
\end{subfigure}
\hfill
\begin{subfigure}[b]{0.15\textwidth}
\begin{tikzpicture}[yscale=0.6,xscale=0.5]
\node [inner sep=0pc] (v0) at (4,0.75) {\Large $\bullet$};    
\node [draw,minimum width=0.9cm] (v1) at (4,0) {Alice};
\node [draw,minimum width=0.9cm] (v2) at (4,-1) {Bob};
\node [draw,minimum width=0.9cm] (v3) at (4,-2) {Carl};
\node [draw,minimum width=0.9cm] (v5) at (4,-3) {Even};
\draw [->] (v0) edge (v1.north);
\draw [->] (v1) edge (v2.north);
\draw [->] (v2) edge (v3.north);
\draw [->] (v3) edge (v5.north);
\node [left of=v3,xshift=1.2pc,anchor=east] {$v_3$};
\node [left of=v5,xshift=1.2pc,anchor=east] {$v_1, v_2$};
\end{tikzpicture}
\caption{\raggedright An optimal tree for $\lambda \ge 5$.\label{fig:ex:giant-lambda}}
\end{subfigure}
\hfill
\begin{subfigure}[b]{0.18\textwidth}
  \centering
\begin{tikzpicture}[yscale=0.6,xscale=0.5]
\node [inner sep=0pc] (v0) at (4,0.75) {\Large $\bullet$};    
\node [draw,minimum width=0.9cm] (v1) at (4,0) {Alice};
\node [draw,minimum width=0.9cm] (v2) at (4,-1) {Bob};
\node [draw,minimum width=0.9cm] (v3) at (4,-2) {Carl};
\node [draw,minimum width=0.9cm] (v5) at (3,-3) {Eve};
\node [draw,minimum width=0.9cm] (v6) at (5,-3) {Frank};
\draw [->] (v0) edge (v1.north);
\draw [->] (v1) edge (v2.north);
\draw [->] (v2) edge (v3.north);
\draw [->] (v3) edge (v5.north);
\draw [->] (v3) edge (v6.north);
\node [right of=v3,xshift=-1.4pc,yshift=0.2pc, anchor=west] {$v_3$};
\node [above left of=v5,xshift=0.8pc,yshift=-0.9pc] {$v_1$};
\node [above right of=v6,xshift=-0.8pc,yshift=-0.9pc] {$v_2$};
\end{tikzpicture}
\caption{\raggedright The optimal tree for $3 < \lambda < 5$.\label{fig:ex:big-lambda}}
\end{subfigure}
\hfill
\begin{subfigure}[b]{0.215\textwidth}
  \begin{tikzpicture}[yscale=0.6,xscale=0.5]
\node [inner sep=0pc] (v0) at (3,0.75) {\Large $\bullet$};    
\node [draw,minimum width=1cm] (v1) at (3,0) {Alice};
\node [draw,minimum width=0.9cm] (v2) at (3,-1) {Bob};
\node [draw,minimum width=0.9cm] (v3) at (2,-2) {Carol};
\node [draw,minimum width=0.9cm] (v4) at (4,-2) {Dan};
\node [draw,minimum width=0.9cm] (v5) at (1,-3) {Eve};
\node [draw,minimum width=0.9cm] (v6) at (3,-3) {Frank};
\draw [->] (v0) edge (v1.north);
\draw [->] (v1) edge (v2.north);
\draw [->] (v2) edge (v3.north);
\draw [->] (v2) edge (v4.north);
\draw [->] (v3) edge (v5.north);
\draw [->] (v3) edge (v6.north);
\node [above left of=v5,xshift=0.8pc,yshift=-0.9pc] {$v_1$};
\node [right of=v6,xshift=-0.8pc] {$v_2$};
\node [right of=v4,xshift=-1.4pc,yshift=0.2pc, anchor=west] {$v_3$};
\end{tikzpicture}
\caption{\raggedright An optimal tree for $1 \le \lambda \le 3$.\label{fig:ex:medium-lambda}}
\end{subfigure}%
%
%
\begin{subfigure}[b]{0.2675\textwidth}%
\begin{tikzpicture}[yscale=0.6,xscale=0.5]%
\node [inner sep=0pc] (v0) at (4,0.75) {\Large $\bullet$};    
\node [draw,minimum width=0.9cm] (va1) at (2,0) {Alice};
\node [draw,minimum width=0.9cm] (va2) at (2,-1) {Bot};
\node [draw,minimum width=0.9cm] (va3) at (2,-2) {Carol};
\node [draw,minimum width=0.9cm] (va5) at (2,-3) {Eve};
\node [draw,minimum width=0.9cm] (vb1) at (4,0) {Alice};
\node [draw,minimum width=0.9cm] (vb2) at (4,-1) {Bob};
\node [draw,minimum width=0.9cm] (vb3) at (4,-2) {Carl};
\node [draw,minimum width=0.9cm] (vb5) at (4,-3) {Frank};
\node [draw,minimum width=0.9cm] (vc1) at (6,0) {Alyce};
\node [draw,minimum width=0.9cm] (vc2) at (6,-1) {Bob};
\node [draw,minimum width=0.9cm] (vc3) at (6,-2) {Dan};
\draw [->] (v0) edge (va1.north);
\draw [->] (v0) edge (vb1.north);
\draw [->] (v0) edge (vc1.north);
\draw [->] (va1) edge (va2.north);
\draw [->] (va2) edge (va3.north);
\draw [->] (va3) edge (va5.north);
\draw [->] (vb1) edge (vb2.north);
\draw [->] (vb2) edge (vb3.north);
\draw [->] (vb3) edge (vb5.north);
\draw [->] (vc1) edge (vc2.north);
\draw [->] (vc2) edge (vc3.north);
\node [right of=vb5,xshift=-0.8pc] {$v_2$};
\node [right of=vc3,xshift=-0.8pc] {$v_3$};
\node [left of=va5,xshift=0.9pc] {$v_1$};
\end{tikzpicture}%
\caption{\raggedright An optimal tree for $\lambda = 0$.\label{fig:ex:small-lambda}}%
\end{subfigure}%
}%
\hfill%

    \caption{A set of string sequences and their optimal summary trees, for different ranges of $\lambda$.  The $\bullet$ root node denotes the sentinel string starting every sequence; nodes marked by $\set{v_1, v_2, v_3}$ correspond to the sequences $\set{x_1, x_2, x_3}$. The choice about whether to split Eve and Frank into two nodes (Figure~\ref{fig:ex:giant-lambda} vs.~\ref{fig:ex:big-lambda}) is a function of their edit distance, $\ed(\mathtt{Eve},\mathtt{Frank}) = 5$. When $\lambda > 5$, then it is cheaper to accept the cost of aligning both to a single label than to pay for an extra node.}
    \label{fig:tree_and_petitions}
\end{figure}
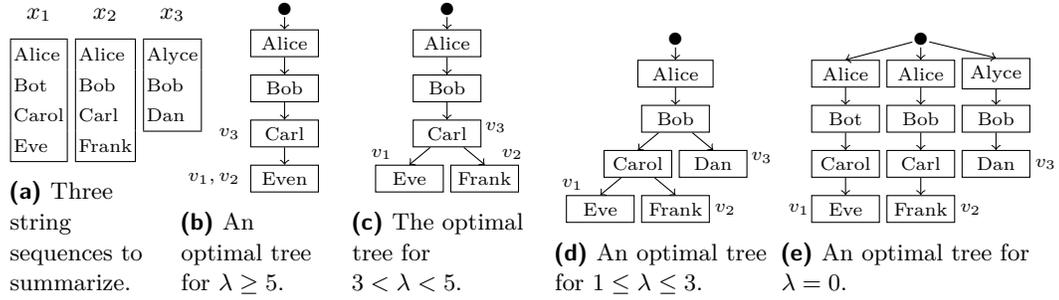

\section{Solving DSSSP for Two Sequences: Edit Distance with Give-up}

Consider first the case of just two input sequences, $m = |X| = 2$.
(Even with $m = 2$, the problem has interesting subtleties.)
We can compute an optimal tree through an alignment algorithm we call \emph{edit distance with give-up.}  The resulting tree has exactly one leaf or two leaves; in the latter case, we call the tree a \emph{bifurcation}.

As with $\ad$, the idea is similar to the classical dynamic program for edit distance, but with one additional operation permitted:  \emph{give up} entirely on aligning the remaining portions of the sequences, and declare a split at this point.  We also modify the costs in the edit distance dynamic program to reflect the $\lambda$ per-node cost of each operation, corresponding to the node-cost term in $\error_\lambda(T)$. 
Writing $\edg(i,j,\lambda)$ to denote the cost of the best alignment of $x_{i, \ldots, |x|}$ and $y_{j, \ldots, |y|}$ under node cost $\lambda$, and writing $\edg(x,y,\lambda) = \edg(1,1,\lambda)$, we have
\begin{align}\label{eq:edg-recurrence}
   \edg(|x|+1, |y|+1,\lambda) &= 0 \\
   \edg(i, |y|+1, \lambda) &= \lambda(|x|-i + 1) \qquad \text{for any $0 \le i \le |x|$}\nonumber\\  
  \edg(|x|+1, j, \lambda) &= \lambda(|y|-j + 1) \qquad \text{for any $0 \le j \le |y|$} \nonumber\\
  \intertext{and, for any $0 \le i \le |x|$ and any $0 \le j \le |y|$,}
   \edg(i,j,\lambda) &= \text{min}  \nonumber
	\left\{\begin{array}{l@{~+~}lll}
    	\edg(i+1, j+1,\lambda) & \lambda + \ed(x_i, y_j) &\text{(substitution)}\\
    	\edg(i, j+1,\lambda) &\lambda + \ed(\varepsilon, y_j)&\text{(insertion)}\\
    	\edg(i+1, j,\lambda) & \lambda + \ed(x_i, \varepsilon) &\text{(deletion)}\\
        \lambda (|x|-i + 1)& \lambda(|y|-j + 1)&\text{(give up)}
        \end{array}\right.
\end{align}
For example, consider the insertion case of the minimum.  Here we match the string $y_j$ with no corresponding entry in $x$, and recursively align $y_{j+1, \ldots, |y|}$ with $x_{i, \ldots, |x|}$, with a total cost of
\[
  \underbrace{\strut\edg(i, j+1,\lambda)}_{\text{cost of alignment of remaining strings}}
  + \underbrace{\lambda\strut}_{\text{cost of creating the root node}}
  + \underbrace{\strut\ed(\varepsilon, y_j).}_{\text{cost of inserting $y_j$ into $\L{T}{x}$}}
\]
The cost of ``giving up''---i.e., declaring a split in the alignment---is large, requiring $(|x|-i + 1)$ nodes on the $x$ branch and $(|y|-j + 1)$ on the $y$ branch, each of which incurs cost $\lambda$.

\begin{figure}[t]
  \centering
  \tikzset{pathnode/.style={draw, circle,inner sep=0pc, outer sep=0pc}}
  \tikzset{every picture/.style={baseline=(current bounding box.north),
      xscale=0.75,every node/.style={inner sep=0pt,outer sep=0pt, minimum size=0.5pc}}}%
  \def\rootcontinue#1#2#3{%
      \begin{tikzpicture}
        \node [draw, circle, inner sep=0pc,outer sep=0pc] (v1) at (0,0) {};
        \node [draw=none, right of=v1, xshift = -1.9pc, anchor=west] {\scriptsize #1};
        \node [draw, inner sep=0.5pc,outer sep=0pc] (v2) at (0,-1) {\scriptsize $\mathrm{BB}(#2,#3,\lambda)$};
        \node [draw=none] (v5) at (0,0.5) {};
        \draw [->] (v5) edge (v1);
        \draw [->] (v1) edge (v2);
      \end{tikzpicture}%
}
\def\xysplit{
    \begin{tikzpicture}
        \node [draw=none] (v0) at (0,0.5) {};
        \node [draw, circle, inner sep=0pc,outer sep=0pc] (v1) at (-1.0,0) {};
        \node [draw=none, right of=v1, xshift = -1.9pc, anchor=west] {\scriptsize $x_1$};
        \node [draw, circle, inner sep=0pc,outer sep=0pc] (v2) at (-1.0,-0.5) {};
        \node [draw=none, right of=v2, xshift = -1.9pc, anchor=west] {\scriptsize $x_2$};
        \node (v3) at (-1.0,-1) {\tiny $\vdots$};
        \node [draw, circle, inner sep=0pc,outer sep=0pc] (v4) at (-1.0,-1.5) {};
        \node [draw=none, right of=v4, xshift = -1.9pc, anchor=west] {\scriptsize $x_{|x|}$};
        \draw [->] (v0) edge (v1);
        \draw [->] (v1) edge (v2);
        \draw (v2) edge (v3);
        \draw [->] (v3) edge (v4);
        \node [draw, circle, inner sep=0pc,outer sep=0pc] (w1) at (1.0,0) {};
        \node [draw=none, right of=w1, xshift = -1.9pc, anchor=west] {\scriptsize $y_1$};
        \node [draw, circle, inner sep=0pc,outer sep=0pc] (w2) at (1.0,-0.5) {};
        \node [draw=none, right of=w2, xshift = -1.9pc, anchor=west] {\scriptsize $y_2$};
        \node (w3) at (1.0,-1) {\tiny $\vdots$};
        \node [draw, circle, inner sep=0pc,outer sep=0pc] (w4) at (1.0,-1.5) {};
        \node [draw=none, right of=w4, xshift = -1.9pc, anchor=west] {\scriptsize $y_{|y|}$};
        \draw [->] (v0) edge (w1);
        \draw [->] (w1) edge (w2);
        \draw (w2) edge (w3);
        \draw [->] (w3) edge (w4);
      \end{tikzpicture}}
  
\begin{tabular}{c|c|c|c}
  $x = \eps$ or $y_1$ is inserted &
  $y = \eps$ or $x_1$ is inserted & 
  substitute $x_1$ for $y_1$ &
  give up (= diverge)\\[-1pc]
  \rootcontinue{$y_1$}{x_{1\ldots|x|}}{y_{2\ldots|y|}} &
  \rootcontinue{$x_1$}{x_{2\ldots|x|}}{y_{1\ldots|y|}} &
  \rootcontinue{$x_1$ or $y_1$}{x_{2\ldots|x|}}{y_{2\ldots|y|}} &
  \xysplit
\end{tabular}
\caption{Constructing the tree in $\buildbifurcation(x,y,\lambda)$. Whenever a node corresponding to $x_{|x|}$ or $y_{|y|}$ is placed, we map the corresponding input sequence to that node of the bifurcation. We write ``BB'' to abbreviate $\buildbifurcation$, and we output an empty tree when $x = y = \eps$.}
\label{fig:bb_bifurcation-itself}
\end{figure}

Denote by \buildbifurcation the natural dynamic programming algorithm that computes $\edg$ using (\ref{eq:edg-recurrence}).  (We abuse notation: $\buildbifurcation(x,y,\lambda)$ denotes either the resulting bifurcation or its cost.  We can construct this bifurcation simultaneously with the construction of the alignment; see Figure~\ref{fig:bb_bifurcation-itself}.)

$\buildbifurcation$ optimally solves DSSSP for $m = 2$ sequences---i.e., the tree $T$ built by $\buildbifurcation(x, y, \lambda)$ minimizes $\error_\lambda(T)$. %
(The proof is deferred to Appendix~\ref{sect:proofs}.)

\begin{restatable}{theorem}{bbisoptimal}
  \label{thm:opt} \label{thm:optimality-of-two-string-reconstruction}
The tree $T^\ast := \buildbifurcation(x,y,\lambda)$ is an optimal summary tree for the string sequences $\set{x,y}$ with node cost $\lambda$.  Specifically, $\error_\lambda(T^\ast) = \edg(x, y, \lambda)$.
\end{restatable}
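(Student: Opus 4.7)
The plan is to prove the stronger statement, by induction on $n := (|x|-i+1) + (|y|-j+1)$, that $\edg(i, j, \lambda) = \min_T \error_\lambda(T)$, where the minimum ranges over all labeled summary trees $T$ for the pair of suffixes $x_{i\ldots|x|}$ and $y_{j\ldots|y|}$. The theorem then follows by taking $i = j = 1$, together with the observation that $\buildbifurcation$ constructively emits a tree $T^\ast$ that achieves this value. For the base cases ($i = |x|+1$ or $j = |y|+1$) only one suffix remains; since $\ad$ forbids deletions from the input, a summary tree must have a path of length at least the surviving suffix's length, and labeling successive nodes with the remaining strings themselves yields zero edit cost, matching the boundary values in~(\ref{eq:edg-recurrence}).

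For the inductive step, take any optimal summary tree $T$ and assume (without loss of generality, via straightforward pruning) that every node of $T$ lies on the root-to-$v_x$ path or the root-to-$v_y$ path, and that every label is non-empty. Then either $T$ has a shared node $r$ at the top level (the shared-root case), or $T$ bifurcates immediately at the sentinel into two branches, each handling one suffix (the give-up case). The give-up case contributes exactly $\lambda(|x|-i+1) + \lambda(|y|-j+1)$ by the base-case argument applied to each branch, matching the give-up term of~(\ref{eq:edg-recurrence}). In the shared-root case, I would case split on $r$'s role in the two alignments: matched to both $x_i$ and $y_j$ (substitution), matched only to $y_j$ while $x$ inserts (insertion), matched only to $x_i$ while $y$ inserts (deletion), or matched to neither. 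The last subcase is strictly suboptimal---removing $r$ yields a valid tree whose cost is lower by $\lambda + 2\ed(\eps, z)$, where $z$ is $r$'s label---so we may discard it. For each of the other three subcases, applying the inductive hypothesis to the subtree below $r$ yields $\edg(i', j', \lambda)$ for the appropriate decrement $(i', j')$, contributing exactly one term of the minimization in~(\ref{eq:edg-recurrence}).

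The main obstacle, and the crux of the argument, is computing the local contribution of a shared root $r$ labeled with some string $z$. When $r$ is matched to both $x_i$ and $y_j$, the contribution is $\lambda + \ed(x_i, z) + \ed(y_j, z)$, and one must show that an optimal choice of $z$ attains $\lambda + \ed(x_i, y_j)$, matching the substitution term exactly. The lower bound is the triangle inequality for $\ed$, which holds because $\ed$ is the cost of a cheapest edit sequence and the concatenation of two edit sequences is itself an edit sequence; the matching upper bound is attained at $z = x_i$. The insertion and deletion subcases are analogous, using $z = y_j$ and $z = x_i$ respectively. Combined with the case analysis above, this yields $\edg(i, j, \lambda) \le \min_T \error_\lambda(T)$. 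The reverse inequality (achievability) follows by tracing $\buildbifurcation$ with the optimal label choices dictated above: its structural decisions, illustrated in Figure~\ref{fig:bb_bifurcation-itself}, mirror the recurrence cases, so $\error_\lambda(T^\ast) = \edg(x, y, \lambda)$, completing the proof.
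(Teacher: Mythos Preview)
Your proposal is correct and follows essentially the same route as the paper's proof: an induction on the remaining suffix lengths, a case split on whether the (sentinel) root has one child or two, and---in the one-child case---a further case split on how the root label participates in the two $\ad$ alignments, with the triangle inequality for $\ed$ doing the work in each subcase. The paper separates the two directions into distinct lemmas (achievability of $\edg$ by $T^\ast$, and $\edg$ as a lower bound on any $T$), whereas you fold them into a single equality-by-induction; but the case analysis, the pruning of extraneous branches, the dismissal of the ``root inserted against both'' subcase, and the triangle-inequality step are all identical in substance.
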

\noindent%
Writing $n = \max(|x|,|y|)$ to denote the length of the longer of the two sequences, and $k = \max(\max_i |x_i|, \max_j |y_j|)$ to denote the length of the longest string in either sequence, then the running time of $\buildbifurcation(x, y, \lambda)$ is $O(n^2 k^2)$.
    
\section{Optimal Summaries of Larger Sets of String Sequences}
\label{sect:larger-sets}

$\buildbifurcation$ efficiently finds the optimal summary tree for $m = 2$ sequences, but DSSSP with large $m$ is computationally intractable.  

\begin{restatable}{theorem}{dssspishard}
\label{thm:hardness-of-reconstruction}
  DSSSP (for an arbitrary number $m$ of string sequences) is NP-hard.
\end{restatable}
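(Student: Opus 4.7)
The plan is to reduce from the \emph{Steiner string problem} for edit distance, which is known to be NP-hard. An instance consists of strings $s_1, \ldots, s_m$ over some alphabet together with a bound $K$; the question is whether there exists a string $w$ with $\sum_{i=1}^{m} \ed(s_i, w) \le K$.

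From such an instance I would construct a DSSSP instance as follows. The input set is $X = \set{x_1, \ldots, x_m}$, where each $x_i$ is the length-one string sequence whose single ``name'' is the string $s_i$. Set the node cost to $\lambda := 1 + \sum_{i=1}^{m} \ed(s_i, s_1)$; this is polynomial-time computable, and it strictly upper-bounds the optimal Steiner cost, since $w = s_1$ already witnesses a Steiner cost of at most $\sum_i \ed(s_i, s_1)$.

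The key claim is that the optimum DSSSP tree for this instance is a one-node tree (plus the sentinel root) labeled by a Steiner string, with total error equal to the optimum Steiner cost plus $\lambda$. To see this, let $\tup{T, f}$ be any labeled summary tree on $X$. Since each $x_i$ has length one (ignoring sentinels), $f(x_i)$ must sit at depth at least one, so $|T| \ge 1$. If $|T| = 1$, the unique non-sentinel node carries some label $w$, every $x_i$ maps to it, $\L{T}{f(x_i)} = (w)$, and
\[
  \error_\lambda(T) \;=\; \sum_{i=1}^m \ad(x_i, (w)) + \lambda \;=\; \sum_{i=1}^m \ed(s_i, w) + \lambda,
\]
which is minimized over $w$ at value $(\text{Steiner cost}) + \lambda$. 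If instead $|T| \ge 2$, then $\error_\lambda(T) \ge 2\lambda > \lambda + (\text{Steiner cost})$ by our choice of $\lambda$, so such trees are strictly suboptimal. Hence the DSSSP decision problem at threshold $K + \lambda$ returns ``yes'' if and only if the Steiner string decision problem at $K$ does, giving a polynomial-time many-one reduction.

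The main obstacle is importing the NP-hardness of the Steiner (median) string problem under edit distance, which is an established but nontrivial result in the literature. Given that black box, the reduction above is clean and, pleasantly, avoids any casework over tree topologies: the careful choice of $\lambda$ makes every branching tree and every deeper linear chain strictly worse than a single non-sentinel node carrying a Steiner string, so the DSSSP optimum collapses cleanly onto the Steiner string optimum. An alternative reduction from Shortest Common Supersequence would also work, but would require the extra argument that the optimum tree is a linear chain of a prescribed length (rather than a one-node tree or a branching structure), which this one sidesteps.
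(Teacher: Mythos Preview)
Your reduction is correct and is precisely the String Median reduction the paper itself sketches: length-one sequences, $\lambda$ chosen large enough that any tree with $|T|\ge 2$ is strictly dominated by a single labeled node, so the DSSSP optimum collapses to the median-string optimum. The paper's fully written-out proof in the appendix takes the alternative route you mention---a reduction from Shortest Common Supersequence with $\lambda=1$ and infinite-cost substitutions/deletions---whose added value is that hardness holds already at a fixed small $\lambda$ rather than at a $\lambda$ that grows with the instance; otherwise the two arguments are equally clean, and yours matches the paper's primary (sketched) approach.
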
%
 \begin{proof}[Proof (idea).]
   For large $\lambda$, hardness follows from a reduction from String Median, which we will encounter shortly.

   While the reduction from Median String is simpler, we can also give an alternative reduction from Shortest Common Supersequence (SCS)~\cite{RAIHA1981187}, which applies for smaller $\lambda$ as well. (Note that SCS is hard in general, but for a small number of strings it is efficiently solvable~\cite{fraser1995subsequences}.) The details of the latter proof are in the appendix.
\end{proof}%
\noindent%
On the other hand, if we are willing to tolerate running times that are exponential in $m$ (but polynomial in the other measures of input size), we can solve DSSSP in polynomial time:
\begin{theorem}
  \label{thm:dsssp_is_fixed_parameter_tractable}
  Let $X$ be a set of string sequences, where $m = |X|$ is the number of sequences, $n = \max_i |x_i|$ is the length of the longest sequence, and $k = \max_j \max_i |x_{i,j}|$ is the length of the longest string in any of the sequences.
  Then there is an algorithm solving DSSSP on $X$ (for any $\lambda$) that runs in time
  $\smash{O(n^{m^2} \cdot 2^m \cdot k^m \cdot \mathrm{poly}(k,m,n))}$.
\end{theorem}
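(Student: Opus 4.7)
The plan is a top-down dynamic program whose state at each subtree records the set $S\subseteq[m]$ of input sequences still active there and, for each such sequence, a pointer to the next position to be substituted. Formally, $f(S,\mathbf{p})$ computes the minimum cost of a subtree descending from a virtual parent that must handle the suffixes $x_{i,p_i},\ldots,x_{i,|x_i|}$ for every $i\in S$, including placing $v_i$ at the node where $x_{i,|x_i|}$ is substituted; the base case is $f(\emptyset,\cdot)=0$. Two moves populate the table: a \emph{split}, which partitions $S$ into $\ge 2$ nonempty subsets whose subtrees hang independently off the common virtual parent and contribute $\sum_j f(S_j,\mathbf{p}|_{S_j})$; and an \emph{extend}, which appends a single node labeled $\ell$ that acts as a substitution for some $T\subseteq S$ (advancing each $p_i$ for $i\in T$ by one, and terminating sequence $i$ when its pointer reaches $|x_i|+1$) and as an insertion for the rest, contributing $\lambda + \sum_{i\in T}\ed(x_{i,p_i},\ell) + \sum_{i\in S\setminus T}\ed(\varepsilon,\ell)$ plus recursion on the updated state. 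Correctness follows by induction on $|S|$: the topmost node of any optimal subtree either branches into subsubtrees (a split) or is a single appended node (an extend), and the DP exhaustively considers both alternatives as well as every partition, subset, and label.

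The main sub-routine is the choice of the optimal label $\ell$ for each $T$ in an extend move: it must minimize $\sum_{i\in T}\ed(x_{i,p_i},\ell) + |S\setminus T|\cdot\ed(\varepsilon,\ell)$, a weighted median-string problem on the $|T|$ strings $\{x_{i,p_i}\}_{i\in T}$ together with a per-character charge of weight $|S\setminus T|$ for each symbol placed in $\ell$. A Sankoff-style dynamic program over $|T|$-tuples of positions in the $|T|$ strings, extended to carry the insertion-side penalty, solves this in $O(k^{|T|+1}\cdot 2^{|T|}\cdot\mathrm{poly}(m,k))$ time, after one first shows (via a standard exchange argument on the output characters) that an optimal $\ell$ uses only characters appearing in the input strings, keeping the effective alphabet polynomial.

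For the running time, the total number of states is $\sum_{S\subseteq[m]}\prod_{i\in S}(|x_i|+1)\le (n+1)^m$. Per state, the split move enumerates at most $B_m = 2^{O(m\log m)}$ partitions (plus $O(m)$ work to sum entries), and the extend move enumerates $2^m$ subsets $T$, each requiring the median-string computation above in $O(k^m\cdot 2^m\cdot\mathrm{poly})$ time. Multiplying these yields an overall running time of $O((n+1)^m\cdot 2^m\cdot k^m\cdot\mathrm{poly}(k,m,n))$, well within the claimed $O(n^{m^2}\cdot 2^m\cdot k^m\cdot\mathrm{poly}(k,m,n))$ bound. The main technical obstacle is the weighted median-string subroutine: while plain median string over a constant number of strings is a classical Sankoff computation, our variant must also charge per output symbol against the $|S\setminus T|$ insertion sequences, so the Sankoff recurrence has to be extended to carry this penalty; verifying that the extension preserves both correctness and the $O(k^{|T|+1})$ running time, together with the alphabet-restriction argument, is the key technical step.
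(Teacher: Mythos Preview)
Your approach is correct and takes a genuinely different route from the paper. The paper proceeds by brute force over tree \emph{topologies}: it enumerates all $O(n^{m(m-1)})$ choices of pairwise divergence indices $\ell_{i,j}$, discards inconsistent ones, and for each surviving topology solves the induced collection of multiple-sequence-alignment instances on the nonbranching segments (invoking Sankoff's median-string DP at each aligned position). Your dynamic program instead indexes states directly by the set $S$ of still-active sequences together with their pointers, giving only $O((n+1)^m)$ states, and handles branching implicitly through the split move rather than by fixing a global topology up front. This buys you a much better dependence on $n$---roughly $n^m$ rather than $n^{m^2}$---at the price of a slightly more delicate correctness argument (you must verify that every optimal subtree decomposes as either a split or an extend at its top node, which you sketch). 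The paper's approach is cruder but makes the reduction to known MSA/median subroutines completely transparent.

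A few small points worth tightening. First, your induction should be on the total remaining length $\sum_{i\in S}(|x_i|+1-p_i)$, not on $|S|$ alone: the extend move need not shrink $S$. Second, you should exclude $T=\emptyset$ in the extend move (or note it is never optimal for $\lambda>0$ and harmless for $\lambda=0$), since otherwise the recurrence has a zero-progress self-loop. Third, the Bell number $B_m$ is $2^{\Theta(m\log m)}$, not $O(2^m)$, so your intermediate claim of $O((n+1)^m\cdot 2^m\cdot k^m\cdot\mathrm{poly})$ is not quite right as written; however, since $B_m\cdot(n+1)^m\ll n^{m^2}$, your final comparison to the theorem's stated bound survives. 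Alternatively, restrict the split move to two-way partitions---repeated two-way splits realize any partition, because splitting adds no nodes---and recover the $2^m$ factor honestly.
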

\begin{proof}
  The approach (see Figure~\ref{fig:brute-force-algorithm} in Appendix~\ref{sect:proofs})
  is brute force:  we look at every possible tree topology $\tau$, which specifies, for any $x,x' \in X$, the indices $i$ and $i'$ into $x$ and $x'$ at which they diverge. 
  Every $\tau$ defines a set of nonbranching segments between divergences; the summary tree problem is then a collection of summary ``path'' problems, one per segment.

  The path problem can be seen as multiple sequence alignment (MSA)~\cite{sankoff1975minimal}, solvable via dynamic programming~\cite{carrillo1988multiple} (see also~\cite{waterman1976some,kececioglu1993maximum,raphael2004novel}).  To implement the MSA dynamic program, we must compute the cost of assigning a set of strings $S = \set{s_1, s_2, \ldots, s_\ell}$ to a single node $u$.  If we label $u$ with the string $z$, then the alignment cost of this node is $\lambda + \sum_i \ed(s_i,z)$; thus the best label is the \emph{string median} of $S$---that is, the string $z$ minimizing the summed edit distance to the strings in $S$.  While string median is NP-hard~\cite{de2000topology}, a dynamic programming algorithm solves string median for a fixed number of strings~\cite{sankoff1975minimal} (see also \cite{sankoff1976frequency,kruskal1983overview,nicolas2005hardness}).
    
  There are $\smash{O(n^{m \cdot (m-1)})}$ tree topologies.  Each defines a tree with $\mathop{\le} m$ leaves and thus $\mathop{\le} 2m$ nonbranching segments.
 
  Each segment contains $\mathop{\le} m$ subsequences, each of length $\mathop{\le} n$; thus each multiple sequence alignment requires $O(\smash{n^m})$ time~\cite{carrillo1988multiple}.  Whenever we compute the string median of a candidate node, we have $\mathop{\le} m$ strings each of length $\mathop{\le} k$; computing these medians takes $O(\smash{(2k)^m})$ time~\cite{sankoff1975minimal}. Finally, it takes $\mathrm{poly}(k,m,n)$ time to compute $\error_\lambda(T)$ for each tree. Thus the overall running time is $\smash{O(n^{m\cdot(m-1)} \cdot n^m \cdot (2k)^m \cdot \mathrm{poly}(n,m,k))}$.
\end{proof}


\section{An Efficient Heuristic for Larger Sets of Sequences}
\label{sec:tree_build}

Given DSSSP's hardness (Theorem~\ref{thm:hardness-of-reconstruction}) and the abominable running time of our exact algorithm (Theorem~\ref{thm:dsssp_is_fixed_parameter_tractable}), we turn here to an efficient heuristic for DSSSP with larger $m$.  Our algorithm is greedy, and seeks to repeatedly identify the pair of sequences in $X$ with the longest shared prefix, and then merge that shared prefix into a single sequence (as in \buildbifurcation).  See Figure~\ref{fig:tree_reconstruction}.  There are several issues that we must resolve:

\begin{figure}[tb]
    \centering
    \includegraphics[width=0.8\textwidth]{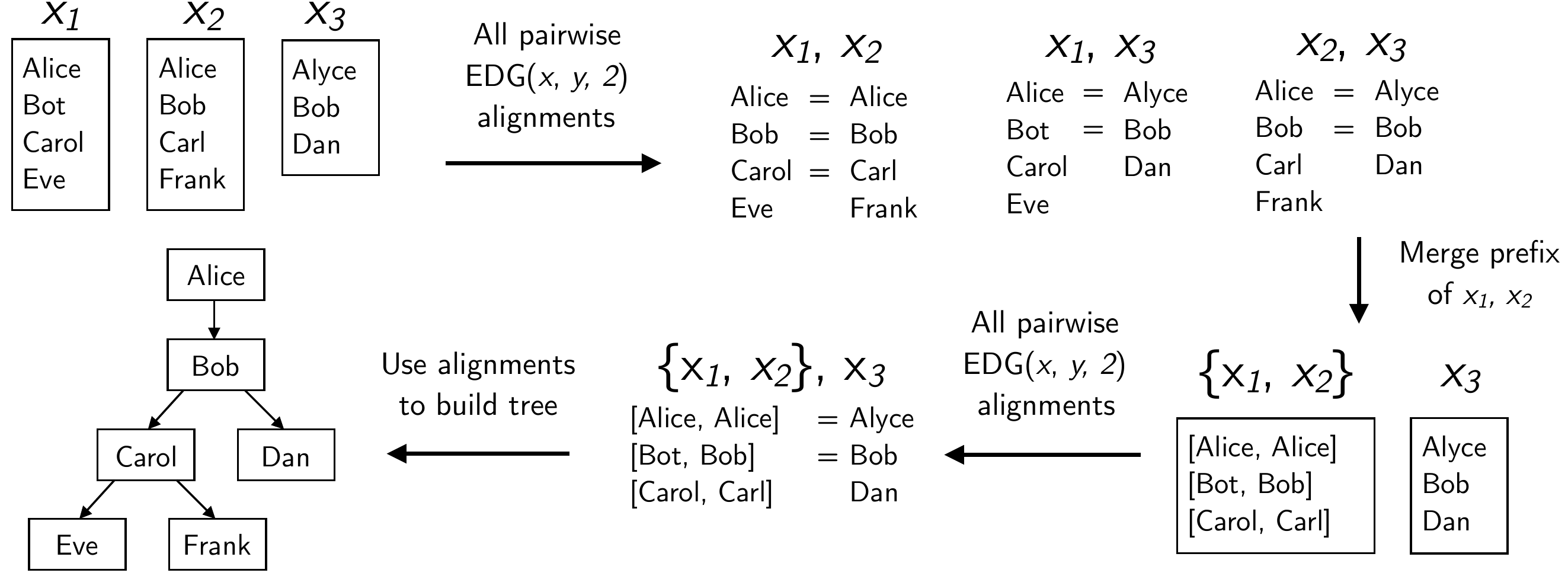}
    \caption{An example run of $\buildtree(X,\lambda)$ for $\lambda=2$ and the sequences from Figure~\ref{fig:tree_and_petitions}.}
    \label{fig:tree_reconstruction}
\end{figure}

\begin{description}
\item[Measuring and merging the shared prefix of $x_i$ and $x_j$.]  To calculate how well $x_i$ and $x_j$ match, we compute the $\edg(x_i,x_j,\lambda)$ alignment.  Define the \emph{number of substitutions \emph{(ignoring insertions and deletions)} in the pre-divergence section $p_{i,j}$ of this alignment} as their overlap.  Then, for the pair $\smash{\set{x_i,x_j}}$ with the largest overlap, replace $\smash{\set{x_i,x_j}}$ with $p_{i,j}$ in $X$. Note that $p_{i,j}$ is a sequence of \emph{lists of strings,} not a sequence of \emph{strings;} thus we need to generalize $\edg$ to sequences of lists of strings, not just individual strings.
  
\item[Reconciling labels in the final resulting tree.] Repeating this merging process will define a tree, except that each node is labeled by a \emph{list} of strings, not just one.

To produce the final labels,
we use the \emph{medoid} string.  For a list of strings $A$, the medoid of $A$ is the string in $A$ whose sum of edit distances to strings in $A$ is minimized.\footnote{When we say ``the'' medoid of $A$, we mean the lexicographically first medoid of $A$. (Which medoid we choose never affects the sum of the distances at hand---e.g., in the sums in~\eqref{eq:cost}---but we need to identify one in particular for the summands to be well-defined.)  Note that the medoid, unlike the median, must be an element of $A$; we use it because it is efficiently computable (unlike median) and is a $(2-o(1))$-approximation to the median (an implication of the triangle inequality).} 
\item[Generalizing $\edg$ to lists of strings.] 

  Define the edit distance between a string $x$ and a set of strings $X$ as $\ed(x, X) := \ed(x,\medoid(X))$---using the medoid of $X$ as its representative string.
Then, letting $\cost(A,B)$ denote the cost of merging lists $A$ and $B$, we define
\begin{equation}\label{eq:cost}
    \cost(A, B) := \textstyle \sum\limits_{x \in A \cup B} \ed(x, A \cup B) - \sum\limits_{x \in A} \ed(x,A) - \sum\limits_{y \in B} \ed(y,B).
\end{equation}
This cost quantifies the amount of additional disagreement incurred by merging the lists, relative to leaving them separate. Insertion and deletion costs are found using (\ref{eq:cost}) with a list of empty strings of appropriate length in place of $x_i$ or $y_j$. We thus define the $\edg$ recurrence (cf.~Equation~(\ref{eq:edg-recurrence})) for sequences of lists of strings $x$ and $y$ as follows:
\[
   \edg(i,j,\lambda) = \text{min}  \nonumber
	\left\{\begin{array}{@{}l@{~+~}lll}
    	\edg(i+1, j+1, \lambda) & \lambda + \cost(x_i, y_j) &\text{(substitution)}\\
                 \edg(i, j+1, \lambda) &\lambda +
      \cost(\set{\text{$|x_i|$ copies of $\eps$}}, y_j)&\text{(insertion)}\\
    	\edg(i+1, j, \lambda) & \lambda + \cost(x_i, \set{\text{$|y_j|$ copies of $\eps$}}) &\text{(deletion)}\\
        \lambda (|x|-i + 1)& \lambda(|y|-j + 1)&\text{(give up)}
        \end{array}\right.
\]
\end{description}

Define $\buildtree(X,\lambda)$ as the greedy iterative algorithm suggested above:  until there is only one sequence left in $X$, find the pair of sequences $x_i, x_j \in X$ with the largest number of substitutions in $\edg(x_i, x_j, \lambda)$, and replace $\set{x_i, x_j}$ by their merged prefix $p_{i,j}$.  (Save the post-divergence branches of the bifurcation; we will reattach those branches at the bottom of $p_{i,j}$ in the final tree.)  When there is only one sequence left, reattach all of the saved branches, and replace each node's list-of-strings label by the medoid of that label list.
(See Figure~\ref{fig:tree_reconstruction}.)


\section{Evaluation and Parameter Selection}
\label{sect:evaluation}

\buildtree is suboptimal both because greedy merging can yield a poor topology and because medoids can be poor node labels; see Examples~\ref{example:bad-example-of-greedy-merging} and~\ref{example:bad-example-of-medoid}.
Still, we will show that it nonetheless performs well on simulated data, suggesting that it is a good heuristic.  

\begin{example}[A bad example for greedy merging]
  \label{example:bad-example-of-greedy-merging}
  Consider the instance
  \[
    x_1 = \text{\tt a b c}
    \qquad\qquad
    x_2 = \text{\tt a b d}
    \qquad\qquad
    x_3 = \text{\tt a e d}
    \qquad\qquad
    x_4 = \text{\tt a e f}
  \]
  with $0.5 < \lambda < 1$. The optimal tree $T^*$ is shown below, with $\error_\lambda(T^*) = 7\lambda$. However, \textsc{BuildTree} will choose to merge sequences with the most closely aligned pair of sequences according to the {\edg} alignment. In this case, it would choose to merge $x_2$ and $x_3$ first. The tree $T$ returned by \buildtree has $\error_\lambda(T) = 5\lambda + 2$, making it suboptimal.
\begin{center}\hfill
\begin{tikzpicture}[
        every node/.style = {align=center},
        level 1/.style={sibling distance=7em},
        level 2/.style={sibling distance=4em},
        level distance=0.75cm]
    \node {{\tt a}}
        child { node {{\tt b}} 
            child { node {{\tt c}} }
            child { node {{\tt d}} }
        } 
        child { node {{\tt e}}
            child { node {{\tt d}} }
            child { node {{\tt f}} }
        }
    ;
    \node at (-2, 0) {$T^*$};
\end{tikzpicture} \hfill
\begin{tikzpicture}[
        every node/.style = {align=center},
        level 1/.style={sibling distance=7em},
        level 2/.style={sibling distance=4em},
        level distance=0.75cm]
    \node {{\tt a}}
        child { node {{\tt b}} 
            child { node {{\tt c}} }
            child { node {{\tt d}} }
            child { node {{\tt f}} }
        } 
    ;
    \node at (-2, 0) {$T$};
\end{tikzpicture} \hfill\text{}
\end{center}
\end{example}

\begin{example}[A bad example for medoids]
  \label{example:bad-example-of-medoid}
  Consider the set $S = \set{\mathtt{XABC},\mathtt{AXBC},\mathtt{ABXC},\mathtt{ABCX}}$, where each string sequence contains just one string.  For large $\lambda$, the optimal tree is just a single node.  Here the optimal label is the median $\mathtt{ABC}$, which has total edit distance 4 to the set $S$; the medoid label $\mathtt{ABCX}$ has total edit distance 6 to $S$.
\end{example}

\subsection{Generating Synthetic Data}
\label{sect:generating-synthetic-data}

We will generate synthetic data based on several parameters:  the number $m$ of string sequences, the string length $k$, a string substitution probability $\sigma_s$, a string deletion probability $\delta_s$, a character substitution probability $\sigma_c$, and a character deletion probability $\delta_c$.

We generate trees using branching processes (i.e., \emph{Galton--Watson trees}~\cite{watson1875probability}): each node chooses to have exactly $i$ children with probability $p_i$; we fix ${p_0=0.03}$, ${p_1=0.94}$, and ${p_2=0.03}$ to approximate real email petition data~\cite{LNK08,GJ10}.  We generate a Galton--Watson tree $T$ until it has $m$ leaves, restarting if the branching process terminates early.  We label each node $u \in T$ with a random alphabetic string $\ell(u)$ of length $k$.  Let $x_i$ denote $\L{T}{u_i}$ for the $i$th leaf $u_i$.  Call $T$ the \emph{true tree,} $\ell$ the \emph{true labels,} and $x_i$ the \emph{true sequences.}

Now, we simulate noisy propagation.  To mirror petition data derived from low-quality scans of printed emails, we introduce string- and character-level errors in separate phases:
\begin{enumerate}
\item[(1)] \emph{string-level errors (which are inherited).}
  Each node $u$ inherits from its parent $p$ the noisy history $h_p$ of its ancestral labels.  (The root ``inherits'' an empty sequence.)  The node $u$ (further) corrupts $h_p$:  for each string in $h_p$, substitute it with a random alphabetic string of length $k$ with probability $\sigma_s$, and delete it with probability $\delta_s$.  Finally, node $u$ appends its true label $\ell(u)$ to $h_p$; call the resulting sequence $h_u$.
  Now each leaf $u$ stores $h_u$, a noisy version of $\L{T}{u}$.  Let $X' = \set{x_1', \dots, x_m'}$ be the set of histories at the leaves.

\item[(2)] \emph{character-level errors (which appear independently).} For each $x_i' \in X'$, substitute each character in each string in $x_i'$ with a random character with probability $\sigma_c$, and delete the character with probability $\delta_c$.  Let $X'' = \set{x_1'', \dots, x_m''}$ be the resulting sequences.
\end{enumerate}
Our experiments use string length $k=25$, string error rates $\sigma_s = \delta_s = 0.001$, character error rates $\sigma_c = \delta_c = 0.1$, and $m \in \set{15, 100}$.  Because string-level errors compound, $0.001$ is a nontrivial error rate (roughly comparable to the $10\%$ character-level error rate).

\subsection{Comparing Reconstruction to Synthetic Ground Truth}
\label{sect:comparing-reconstruction-to-ground-truth}

We generate a true tree $T$, with true sequences $X = \set{x_1, \ldots, x_m}$ and corrupted sequences $X'' = \set{x_1'', \dots, x_m''}$.  We then use our heuristic algorithm to build a reconstructed tree $T' = \buildtree(X'',\alpha)$, for some choice of a node-cost parameter $\alpha$ to use in the reconstruction algorithm. (See Section~\ref{sec:parameter_selection} regarding how to choose $\alpha$.)
Note the distinction between the reconstruction parameter $\alpha$ and the evaluation parameter $\lambda$: $\buildtree$ seeks to optimize $\error_\alpha(T')$ for some value of $\alpha$, but we can assess the quality of $T'$ using $\error_\lambda(T')$ whether or not $\alpha = \lambda$, to evaluate sensitivity to $\alpha$.

We will compare the quality of \buildtree to the threshold-based reconstruction algorithm from~\cite{LNK08}, which we briefly describe here.  (1) Construct a weighted directed graph $G$ with nodes labeled by all strings in all $x \in X$, and with an $a \to b$ edge if string $a$ immediately precedes string $b$ in any sequence $x \in X$.  The weight of this edge is the \emph{number} of sequences $x \in X$ containing $a$ and $b$ successively.  (2) To handle minor signature errors, treat two signatures as equivalent if they follow  equivalent signatories and have an edit distance below a fixed threshold $\beta$. (3) Compute the tree as a max-weight spanning arborescence of $G$, with extraneous nodes pruned away.  (Note that $\beta$ is on the same scale as $\alpha$; it defines a cutoff of $\ed(a,a')$ indicating when $a$ and $a'$ should be assigned to two nodes versus one.)

\subparagraph*{Measures of performance.}
We assess the quality of our reconstructed trees $T'$ in two ways.

First, we use the $\error_\lambda(T')$ measure from~\eqref{eq:error}, the sum of $\lambda \cdot |T'|$ and all label distances. 
Second, we use \emph{tree edit distance} (TED) to compare the structure of $T'$ to the true tree~$T$.  Edit distance between unordered trees is NP-hard~\cite{zhang1992editing}, so we estimate the distance by ordering our trees and computing \emph{ordered} TED using Zhang--Shasha~\cite{ZS89}, as implemented by Henderson~\cite{zssmodule}.  (Specifically, we recursively order sibling nodes in $T'$ to minimize the number of leaf order inversions with respect to $T$.  As the number of siblings is typically small,
this order is not expensive to compute.) To better interpret our results, we also compute TED to a 
tree
with the correct structure but with label corruptions, denoted $\tilde{T}$; this represents the best reconstruction we could hope for (without computing medians). We construct $\tilde{T}$ by labeling the ancestors of each leaf $u_i$ using the noisy history $h_{u_i}$ (picking the medoid for nodes assigned multiple labels), deleting nodes with empty labels.%
\begin{figure}[t]
    \centering
    \begin{subfigure}[t]{0.95\textwidth}
        \centering
        \includegraphics[width=\textwidth]{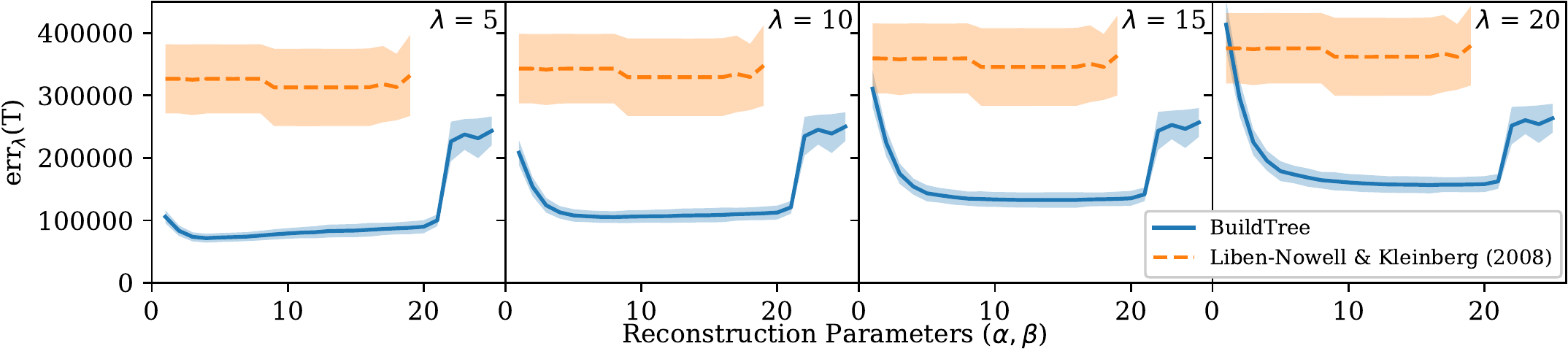}
        \caption{Error under $\error_{\lambda}(T)$ with $|X| = 100$, for several $\lambda$ values, averaged across 8 trials.}
        \label{fig:reconstruction_quality:error_measure:one-big-tree}
    \end{subfigure}

    \smallskip
    \begin{subfigure}[t]{0.48\textwidth}
        \centering
        \includegraphics[width=\textwidth]{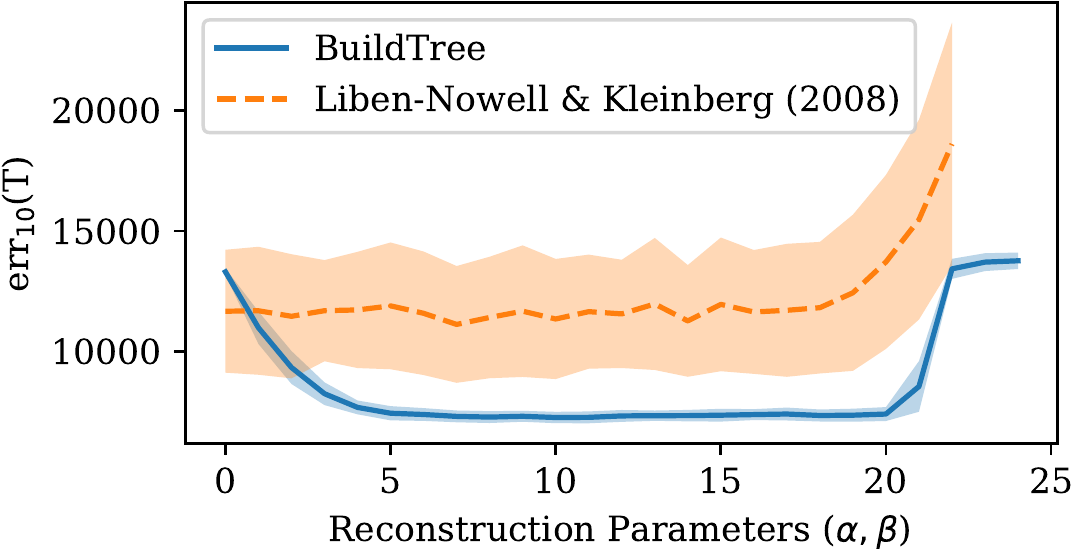}
        \caption{Error under $\error_{10}(T)$ with $|X| = 15$, averaged across 500 trials.  Smaller $X$ makes the error smaller than in (a), but the trend is similar.}
        \label{fig:reconstruction_quality:error_measure:many-small-trees}        
    \end{subfigure}%
    \quad 
    \begin{subfigure}[t]{0.48\textwidth}
        \centering
        \includegraphics[width=\textwidth]{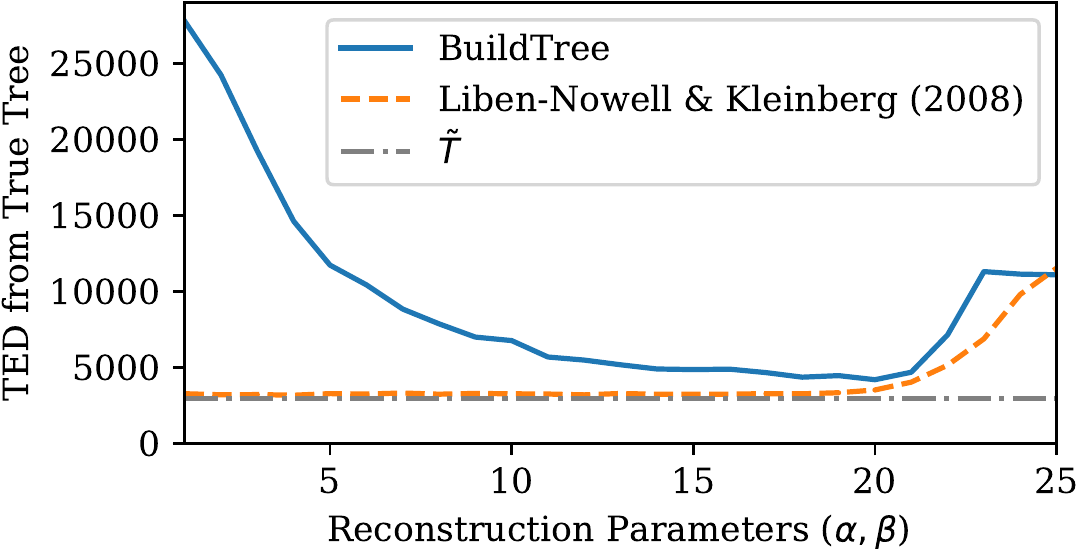}
        \caption{Error under ordered tree edit distance (TED) on a $|X| = 15$ dataset. The dash-dotted line shows TED between $\tilde T$ and the real tree.}
        \label{fig:reconstruction_quality:ZS_measure:many-small-trees}        
    \end{subfigure}
    \caption{Comparing $\buildtree(X,\alpha)$ to the algorithm from~\cite{LNK08} with edit-distance threshold $\beta$.  Shaded regions in (a) and (b) show standard deviations across the stated number of trials.  Observe the flat-bottomed basin shape of the error curve for $\buildtree(X,\alpha)$ as $\alpha$ varies:  high error rates when $\alpha < 5$ and when $\alpha > 20$, and roughly constant low error for all $\alpha$ in between.}
    \label{fig:reconstruction_quality}
\end{figure}%

\subparagraph*{Evaluation results.}
We generated a tree $T$ with $m = 100$ sequences, and corrupted sequences $X''$ (with 8 independent trials generating different $X''$ from $T$). Figure~\ref{fig:reconstruction_quality:error_measure:one-big-tree} compares  $\error_\lambda(\cdot)$ of our reconstruction 
with the method from~\cite{LNK08}, showing that \buildtree performs significantly better over a range of $\lambda$ values.  These trees were too big to compute tree edit distance, which is computationally prohibitive. 

In addition, we generated a tree with $m = 15$ sequences and introduced error in 500 trials. Figure~\ref{fig:reconstruction_quality} shows both $\error_{10}(T)$ (Figure~\ref{fig:reconstruction_quality:error_measure:many-small-trees}) and TED (Figure~\ref{fig:reconstruction_quality:ZS_measure:many-small-trees}) between the reconstructed tree $\widehat{T}$ and the true tree $T$. For many values of $\alpha$, $\buildtree(X'',\alpha)$ produces significantly better solutions than the method from~\cite{LNK08}, as shown by the $\error_{10}(T)$ measure. It is also competitive at $\alpha=20$ according to (ordered) tree edit distance, a metric \buildtree was not designed to optimize.

\subsection{How Should the Node Cost be Selected?}
\label{sec:parameter_selection}

To reconstruct a tree from real sequence data, we must select a value for the reconstruction parameter $\alpha$.  If $\alpha$ is too low, nodes are too cheap and trees diverge too early; if $\alpha$ is too high, nodes are too costly and trees branch too rarely.  But what value should we choose?

Intuitively, we wish to map two (corrupted) strings to the same node if they are (corruptions of) the same true string.  Imagine two strings $u$ and $v \neq u$, and let $u'$, $u''$, and $v'$ be the result of independently introducing errors to $u$, $u$, and $v$, respectively.  We desire a value of $\alpha$ so that $u'$ and $u''$ would probably be mapped to the same node, but $u'$ and $v'$ probably diverge.  That is, the cost of creating a new node should be greater than the cost of aligning two corrupted versions of the same string, but the cost of diverging should be less than the cost of deleting/inserting all remaining strings.  Thus we want $E[\ed(u', u'')] < \alpha < E[\ed(u', v')]$.  

$E[\ed(u', u'')]$ depends on the error rate of the corruption process and must be estimated from data.  But estimates of $E[\ed(u', v')]$ appear in the literature if the true strings $u$ and $v$ are uniformly random and have equal length.  If the cost of substitution is twice that of insertion/deletion,
then we can calculate $E[\ed(u', v')]$ using Chv\'{a}tal--Sankoff numbers~\cite{chvatal_sankoff_1975} (the expected longest common subsequence length for two equal-length random strings); for unit edit costs, it is conjectured that $E[{\ed}(u', v')] \approx |u'|(1-\frac{1}{|\Sigma|})$ for random equal-length strings over $\Sigma$~\cite{Nav01}. If strings are not uniformly random or have different lengths, then we can estimate $E[{\ed}(u', v')]$ from data and subsequently select an $\alpha$ between the upper and lower bounds.
(In fact, there is some evidence that many values of $\alpha$ in this range perform well; see the flat-bottomed basin shape of Figure~\ref{fig:reconstruction_quality:error_measure:many-small-trees}.)

\section{Discussion and Future Work}

We described an efficient, practical heuristic to reconstruct a propagation tree from a noisy set of diverging string sequences---and in Section~\ref{sect:evaluation} we showed that \buildtree performs well on synthetic data.  But, after all, the motivation for introducing this particular theoretical problem was for its application to real data, particularly chain-letter petitions.  Rigorously testing \buildtree on real data, then, is perhaps the most natural direction for future work.
(Testing on more realistic synthetic data is also an interesting future direction.  Our data-generation process in Section 7.1 is unrealistic in a number of ways, perhaps most strikingly in its assumption that a name is a length-$k$ alphabetic string chosen uniformly at random.  More realistic randomized name-generation processes would make the synthetic task more similar to the real one.)

That said, there are several potentially interesting theoretical avenues for further exploration of DSSSP, too.  The problem is (at least theoretically) tractable for any fixed number $m$ of string sequences---but the dependence on $m$ in our brute-force algorithm is brutal.  Is there a more efficient algorithm for small $m$?  Or are there efficient algorithms with provable approximation guarantees for general $m$?  We can approximate the best labels for a fixed tree topology using medoids or, even better, using a PTAS for the string median problem~\cite{li2002finding}.  Identifying the best tree topology seems more challenging, but perhaps this topological source of error in \buildtree (or some other heuristic) can be bounded.  

There is another set of interesting open questions related to efficient algorithms for DSSSP when $\lambda$ is small.
(At the other extreme, the problem remains intractable when $\lambda$ is very large:  even if the optimal tree's topology is the trivial non-branching one, as in Figure~\ref{fig:ex:giant-lambda}, choosing the labels requires repeatedly solving instances of the NP-hard string median problem.)

DSSSP is trivial when $\lambda = 0$; the diverge-at-the-root tree (as in Figure~\ref{fig:ex:small-lambda}) is optimal, with total cost $0$.  Even for strictly positive but small values of $\lambda$, there is an easy solution:  if $\lambda \le \smash{\frac{1}{nm}}$, it optimal to diverge upon encountering even a one-character difference between strings (i.e., the optimal summary tree is precisely the trie representation of the set $X$).  
Do related approaches make the problem tractable for bigger values of $\lambda$---e.g., if $\lambda$ is small enough that we can only afford a bounded budget of edits in node labels? For how large a value of $\lambda$ are there exact polynomial-time algorithms?

\newpage
\bibliography{paper}

\newpage
\appendix

\section{Proofs Omitted from the Main Paper}
\label{sect:proofs}

\subsection{Proofs from Section~\protect\ref{sect:the-problem}}

\bbisoptimal*
\begin{proof}
  We show that $T^\ast := \buildbifurcation(x, y, \lambda)$ satisfies $\error_\lambda(T^\ast) \leq \edg(x, y, \lambda)$ (Lemma~\ref{lemma:build}), and that $\error_\lambda(T) \geq \edg(x, y, \lambda)$ for any valid summary tree $T$ (Lemma~\ref{lemma:opt}).  Thus $T^\ast$ is optimal.
\end{proof}

\begin{lemma} \label{lemma:build}
The tree $T^*$ built by $\buildbifurcation(x, y, \lambda)$ has $\error_\lambda(T^*) \leq \edg(x, y, \lambda)$.
\end{lemma}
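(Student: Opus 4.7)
The plan is to prove the lemma by induction, strengthening the statement to: for every $(i,j)$ with $1 \le i \le |x|+1$ and $1 \le j \le |y|+1$, the tree $T^*_{i,j} := \buildbifurcation(x_{i,\ldots,|x|}, y_{j,\ldots,|y|}, \lambda)$, equipped with the natural mapping of the two suffixes to its leaves, satisfies $\error_\lambda(T^*_{i,j}) \le \edg(i, j, \lambda)$. The lemma is the case $(i,j) = (1,1)$, and I would induct on $(|x|+|y|) - (i+j)$.

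The base case $i = |x|+1$, $j = |y|+1$ is immediate: $\buildbifurcation$ returns an empty tree and both sides equal $0$. For the mixed boundary cases, where one suffix is exhausted but the other is not, the returned tree is a nonbranching path labeled verbatim by the remaining strings of the non-exhausted sequence; label alignment is then perfect and the node count matches the boundary values $\lambda(|x|-i+1)$ or $\lambda(|y|-j+1)$ of~\eqref{eq:edg-recurrence}. For the inductive step I would do a case analysis on which of the four options attains the minimum in the $\edg$ recurrence, mirroring the four cases in Figure~\ref{fig:bb_bifurcation-itself}. In the substitution, insertion, and deletion cases, $\buildbifurcation$ creates a new root labeled with $x_i$ or $y_j$, attaches the recursively built subtree $T^*_{i',j'}$ below it, and propagates the leaf mapping. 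The bookkeeping to verify is that the incremental cost at this new root, namely a $\lambda$ contribution to $|T^*_{i,j}|$ plus the $\ed$ contribution at this single position to each of the two $\ad$ alignments of the suffixes along their root-to-leaf paths, exactly equals the additive term paid in the recurrence. In the give-up case, the algorithm emits two independent paths labeled verbatim by the remaining suffixes, paying $\lambda(|x|-i+1) + \lambda(|y|-j+1)$ in node cost with $\ad$ cost $0$.

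The only real subtlety, and where I expect to spend the most care, is the labeling choice in the substitution branch: the recurrence charges $\lambda + \ed(x_i, y_j)$, and the tree incurs exactly this whether the new root is labeled $x_i$ or $y_j$, since one of the two suffix alignments contributes $0$ at this node and the other contributes $\ed(x_i, y_j)$. A similar telescoping happens in the insertion and deletion cases, where the $\ed(\varepsilon, y_j)$ or $\ed(x_i, \varepsilon)$ term is charged by exactly one of the two suffix alignments while the other pays nothing at that position. Once these per-case accountings are assembled, the inductive step reads $\error_\lambda(T^*_{i,j}) \le (\text{cost of chosen top-level op}) + \edg(i', j', \lambda) = \edg(i, j, \lambda)$ for the operation that attains the minimum, completing the induction.
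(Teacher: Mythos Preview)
Your proposal is correct and follows essentially the same approach as the paper's proof: induction on the remaining lengths of the two sequences, with a case analysis on which option of the $\edg$ recurrence attains the minimum, and in each case exhibiting a specific (possibly suboptimal) $\ad$ alignment at the newly created root whose incremental cost matches the additive term in the recurrence. The only cosmetic difference is that you make the induction variable $(i,j)$ explicit and separate out the mixed boundary cases, whereas the paper folds these into a single ``at least one sequence is empty'' base case.
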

\begin{proof}
By induction on the lengths of $x$ and $y$. 

\emph{Base Case:}
If at least one of $x$ and $y$ are empty, then suppose without loss of generality that $x$ is empty. Then the optimal $\edg$ alignment of $x$ and $y$ diverges immediately, with a cost of $\edg(x, y, \lambda) = \lambda |y|$. Thus, {\buildbifurcation} returns the path $T^\ast$ labeled by $y$, which has $\error_\lambda(T^*) = \lambda|y|$. Therefore $\error_\lambda(T^*) \leq \edg(x, y)$.

\emph{Inductive Case:}
Assume $x$ and $y$ are both non-empty sequences. Consider the possible cases for the $\edg$ alignment:
\begin{enumerate}[a.]
   \item \emph{$\edg$ chooses to align $x_1$ and $y_1$ with each other.} Then $\edg(1,1, \lambda) = \edg(2,2, \lambda) + \lambda + \ed(x_1, y_1)$. The tree produced in this case has its top non-sentinel node labeled by $x_1$ followed by the subtree $\buildbifurcation(x_{2 \ldots |x|}, y_{2 \ldots |y|}, \lambda)$. By inductive hypothesis, the error of the subtree is at most $\edg(2,2, \lambda)$. One possible {\ad} alignment of $x$ to $\L{T^*}{v_x}$ substitutes $x_1$ for $[\L{T^*}{v_x}]_1 = x_1$, and a possible alignment of $y$ to $\L{T^*}{v_y}$ substitutes $y_1$ for $[\L{T^*}{v_y}]_1 = x_1$. Using these (potentially suboptimal) alignments, the error of the tree would be at most $ \edg(2,2,\lambda) + \lambda + \ed(x_1, y_1)$, which is exactly the cost of $\edg(x, y, \lambda)$. The optimal {\ad} alignment must have cost less than or equal to this particular alignment, so $\error_\lambda(T^*) \leq \edg(x, y, \lambda)$.

   \item \emph{$\edg$ chooses to align $y_1$ with $\eps$.}
    Then $\edg(1, 1, \lambda) = \edg(1, 2, \lambda) + \lambda + \ed(\varepsilon, y_1)$. The tree produced in this case has its top non-sentinel node labeled by $y_1$ with the subtree produced by $\buildbifurcation(x_{1 \ldots |x|}, y_{2\ldots |y|}, \lambda)$. By the inductive hypothesis, the error of the subtree is at most $\edg(1, 2, \lambda)$. One possible {\ad} alignment of $x$ to $\L{T^*}{v_x}$ inserts $[\L{T^*}{v_x}]_1 = y_1$ and then optimally aligns $x$ to the rest of $\L{T^*}{v_x}$. Similarly, one alignment of $y$ to $\L{T^*}{v_y}$ substitutes $y_1$ for $[\L{T^*}{v_y}]_1 = y_1$ and optimally aligns the rest. Therefore, the total error of the tree produced by $\edg(1,1, \lambda)$ with this alignment is at most $\edg(1, 2, \lambda) + \lambda + \ed(\varepsilon, y_1)$, which is indeed the cost of $\edg(x, y, \lambda)$. The optimal {\ad} alignment must have cost less than or equal to this alignment, so $\error_\lambda(T^*) \leq \edg(x, y, \lambda)$. \label{case:A1:b}

   \item \emph{$\edg$ chooses to align $x_1$ with $\eps$.}
     This case is symmetric to Case (\ref{case:A1:b}). 

   \item \emph{$\edg$ chooses to diverge immediately.} Then $\edg(x,y, \lambda) = \lambda|x| + \lambda|y|$. \buildbifurcation constructs $T^*$ by labeling two paths by $x$ and $y$ and placing them under a sentinel root. This tree $T^*$ has $\error_\lambda(T^*) =  \lambda|x| + \lambda|y| \le \edg(x, y, \lambda)$.   \qedhere
\end{enumerate}
\end{proof}

\begin{lemma}\label{lemma:opt}
$\edg(x, y, \lambda) \leq \error_\lambda(T)$ for any valid tree $T$ representing $x$ and $y$.
\end{lemma}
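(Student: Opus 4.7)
The plan is to exhibit, for any valid summary tree $T$ representing $x$ and $y$, an $\edg$ alignment of $x$ and $y$ whose cost is at most $\error_\lambda(T)$; the lemma then follows since $\edg(x, y, \lambda)$ is by definition the minimum cost over all such alignments.

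First I would normalize the tree: discard any node of $T$ not on the root-to-$v_x$ or root-to-$v_y$ paths, which only decreases $\error_\lambda(T)$. Let $u$ be the lowest common ancestor of $v_x$ and $v_y$; the pruned tree then decomposes as a shared prefix $w_1, \ldots, w_k$ (with labels $\ell_1, \ldots, \ell_k$) from the sentinel's child to $u$, together with two possibly-empty branches of lengths $p$ and $q$ running from $u$ down to $v_x$ and $v_y$. Then I would fix an optimal $\ad$ alignment of $x$ to $\L{T}{v_x}$ and another of $y$ to $\L{T}{v_y}$. Because $\ad$ forbids deletions from $x$ (resp.\ $y$), each $x_a$ is substituted into a unique target node while every other target node is an insertion, and by monotonicity this induces a clean split of $x$ into a prefix of length $s$ aligned to the shared nodes and a suffix aligned to the $v_x$-branch (so $|x| - s \le p$), with analogous quantities $t$ and $|y| - t \le q$ for $y$.

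Next I would build the $\edg$ alignment by walking down the shared prefix and at each $w_i$ taking the $\edg$ move that mirrors the $\ad$ behavior: substitution when both some $x_a$ and some $y_b$ align to $w_i$, deletion when only some $x_a$ does, insertion when only some $y_b$ does, and no operation at all when neither does. By the triangle inequality for $\ed$, each emitted $\edg$ move at $w_i$ costs at most $\lambda$ plus the per-node $\ad$ contribution there---for instance in the substitution case, $\ed(x_a, y_b) \le \ed(x_a, \ell_i) + \ed(\ell_i, y_b)$, and similarly $\ed(x_a, \eps) \le \ed(x_a, \ell_i) + \ed(\ell_i, \eps)$ for deletion. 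After the shared prefix I would invoke give-up, paying $\lambda(|x|-s) + \lambda(|y|-t) \le \lambda p + \lambda q$, which is in turn at most the branches' contribution to $\error_\lambda(T)$. Summing the per-step bounds over the shared prefix and adding the give-up bound then yields $\edg(x, y, \lambda) \le \error_\lambda(T)$.

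The main subtlety will be the ``neither'' case: $\edg$ has no explicit no-op, so one must verify that simply omitting an $\edg$ step at an unmatched shared node still produces a well-formed $\edg$ alignment. This is really just bookkeeping---an $\edg$ alignment need only consume all of $x$ and $y$, not account for every node of $T$, so skipping such a node shortens the alignment and strictly lowers its cost, while the tree still pays $\lambda + 2\,\ed(\eps, \ell_i) \ge 0$ for that node. Verifying the corner cases (empty shared prefix, $v_x = v_y$, one of $v_x, v_y$ an ancestor of the other) is then routine, as each specializes the decomposition above with $k = 0$ or $p = 0$ or $q = 0$.
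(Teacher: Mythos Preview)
Your argument is correct. Both you and the paper rely on the triangle inequality for $\ed$ to compare a constructed $\edg$ alignment against the per-node contributions to $\error_\lambda(T)$, but the organization differs. The paper proceeds by induction on $|x|+|y|$, peeling off the single child of the sentinel root, doing a four-way case split on how the optimal $\ad$ alignments of $x$ and $y$ treat that one label $\ell$, and invoking the inductive hypothesis on the subtree; the two-children-at-the-root case is handled separately by comparing to immediate give-up. Your proof is instead a single global construction: prune to the two root-to-leaf paths, split at the lowest common ancestor, read off the $\ad$ alignments along the shared prefix to synthesize $\edg$ moves node by node, and finish with one give-up bounded by the branch sizes. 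Your decomposition avoids the induction and the separate root-branching case, and it makes the accounting (one tree node $\leftrightarrow$ at most one $\edg$ move, with the ``neither'' case simply dropped) quite transparent. The paper's induction, on the other hand, mirrors the recursive definition of $\edg$ more directly, which makes it easier to see that every case of the recurrence is being exercised. Either way the crux is the same triangle-inequality bound $\ed(x_a,y_b)\le \ed(x_a,\ell)+\ed(\ell,y_b)$ (and its degenerate forms with $\eps$), and your handling of the corner cases ($k=0$, $p=0$, $q=0$) is indeed routine.
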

\begin{proof}
We proceed by induction on the lengths of $x$ and $y$.

\textit{Base Case:} If at least one of $x$ and $y$ are empty sequences, then suppose without loss of generality that $x$ is empty. The optimal {\edg} alignment gives up immediately, for a cost of $\lambda |y|$. Any tree $T$ representing $x$ and $y$ must have at least $|y|$ nodes for $\ad(y, \L{T}{v_y})$ to be defined, so $\error_\lambda(T)\ge \lambda |y|$. Thus $\edg(x, y, \lambda) \leq \error_\lambda(T)$.

\textit{Inductive Case: } Assume $x$ and $y$ are both non-empty sequences and let $T$ be any valid tree representing $x$ and $y$. If the root of $T$ has more than two children, we can strictly decrease $\error_\lambda(T)$ by deleting the children (and their subtrees) that do not contribute to $\L{T}{v_x}$ or $\L{T}{v_y}$, bringing the number of children down to at most two. We can therefore ignore this case. If the (sentinel) root of $T$ has exactly two children, then $T$ must have at least $|x| + |y|$ non-sentinel nodes to be valid, so $\error_\lambda(T) \ge \lambda|x| + \lambda|y|$. One possible $\edg$ alignment gives up immediately for a cost of $\lambda|x| + \lambda|y|$, so $\edg(x, y, \lambda) \le \lambda|x| + \lambda|y| \le \error_\lambda(T)$.

Now suppose the sentinel root of $T$ has one child labeled $\ell$. Call the subtree under this node $T'$ (adding a sentinel root to $T'$ to ensure it is a tree). Consider the cases for the optimal $\ad$ alignments of $x$ and $y$ to their paths in $T$, with respect to $\ell$.
\begin{enumerate}[a.]
        \item If $x_1$ and $y_1$ are both substituted for $\ell$ in the optimal alignments, then $\error_\lambda(T) = \error_\lambda(T') + \lambda + \ed(x_1, \ell) + \ed(y_1, \ell)$. Notice that $T'$ is a summary tree of $x_{2\ldots|x|}$ and $y_{2\ldots|y|}$. By the inductive hypothesis, $\edg(2, 2, \lambda) \le \error_\lambda(T')$. One possible {\edg} alignment substitutes $x_1$ for $y_1$, with a cost of $\edg(2, 2, \lambda) + \lambda + \ed(x_1, y_1)\le \error_\lambda(T') + \lambda + \ed(x_1, \ell) + \ed(y_1, \ell)$ (using the triangle inequality for edit distance). Thus $\edg(x, y, \lambda) \le \error_\lambda(T)$. 
        
        \item If $x_1$ is substituted for $\ell$ and $\ell$ is inserted against $y$ in the optimal alignment, then $\error_\lambda(T) = \error_\lambda(T') + \lambda + \ed(x_1, \ell) + \ed(\ell, \varepsilon)$. Notice that $T'$ is a summary tree of $x_{2\ldots|x|}$ and $y_{1\ldots|y|}$. By the inductive hypothesis, $\edg(2, 1, \lambda) \le \error_\lambda(T')$. One possible {\edg} alignment deletes $x_1$, with a cost of $\edg(2, 1, \lambda) + \lambda + \ed(x_1, \varepsilon)\le \error_\lambda(T') + \lambda + \ed(x_1, \ell) + \ed(\ell, \varepsilon)$ (using the triangle inequality for edit distance). Thus $\edg(x, y, \lambda) \le \error_\lambda(T)$. \label{case:A2:b}
        
        \item The case where $y_1$ is substituted for $\ell$ and $\ell$ is inserted against $x$ is symmetric to Case~(\ref{case:A2:b}).
        
        \item If $\ell$ is inserted against both $x$ and $y$, then $T'$ is a summary tree of $x$ and $y$ with smaller error. We can thus reduce this case to one of the previous cases.  \qedhere
\end{enumerate}
\end{proof}

\subsection{Proofs from Section~\protect\ref{sect:larger-sets}}

As suggested in Section~\protect\ref{sect:larger-sets}, a reduction from the String Median problem shows that DSSSP is hard for sufficiently large $\lambda$---large enough that the optimal summary tree never branches---even if all $m$ string sequences have length~$1$.
While the reduction from Median String is simpler, we also provide an alternative reduction from Shortest Common Supersequence (SCS)~\cite{RAIHA1981187}, which applies for smaller $\lambda$ as well. (Note that SCS is hard in general, but for a small number of strings it is efficiently solvable~\cite{fraser1995subsequences}.)

\dssspishard*
\begin{proof}
   Given an instance of SCS consisting of a set of strings $\set{x_1, \dots, x_m}$, construct the following instance of DSSSP. Let $\lambda = 1$ and treat each string $x_i$ as a sequence of single-character strings. Additionally, set the $\ed$ cost of non-identical substitutions and deletions to $\infty$ and the cost of identical substitutions and insertions to $0$.

   Let $s$ be the length of the SCS of $\set{x_1, \dots, x_m}$. In this DSSSP instance, one possible summary tree $T^*$ consists of a single length-$s$ path labeled by the SCS of $\set{x_1, \dots, x_m}$.

   To show that $T^*$ is optimal, we show that any finite-error summary tree $T'$ can be converted into a path labeled by a supersequence of $\set{x_1, \dots, x_m}$ without increasing its error. Note that the $\ad$ alignments in $T'$ must have only insertions and substitutions of identical symbols if $\error_\lambda(T')$ is finite. To convert $T'$ into a path, pick an arbitrary leaf $v$ of $T'$. If $T'$ is not a path already, then it has some branch that diverges from the root-to-$v$ path. Take the branch after this divergence and move its root to be a child of $v$. $T'$ now has one fewer leaf, but the same number of nodes. Moreover, the $\ad$ alignments still have cost zero, as we can align strings to their modified paths in $T'$ by using more zero-cost insertions. By repeating this process until only one leaf remains, we can convert $T'$ into a path with the same $\error_\lambda(T')$, which must then be labeled by a supersequence of $\set{x_1, \dots, x_m}$. Therefore $T^*$ is an optimal DSSSP solution, as it is the cheapest of all such paths.

   If we could solve DSSSP in polynomial time, then we could efficiently convert this solution into a path labeled by a shortest common supersequence of $\set{x_1, \dots, x_m}$ using the iterative method described above.
 \end{proof}

In Theorem~\ref{thm:dsssp_is_fixed_parameter_tractable}, we describe an algorithm for DSSSP.  The algorithm's running time is $\smash{O(n^{m^2} \cdot 2^m \cdot k^m \cdot \mathrm{poly}(k,m,n))}$, where $m$ is the number of sequences, $n$ is the length of the longest sequence, and $k$ is the length of the longest string in any sequence.  Figure~\ref{fig:brute-force-algorithm} shows the pseudocode for the algorithm described in the proof of Theorem~\ref{thm:dsssp_is_fixed_parameter_tractable}.

\begin{figure}[h!]
  \fbox{\begin{minipage}{\linewidth}
    \begin{tabbing}
      xx \= xx \= xx \= xx \kill
      For each topology $\tau$, specified by $m \cdot (m-1)$ numbers $\ell_{i,j} \in \set{0, 1, \ldots, n}$:\\
      \> \comment{$\ell_{i,j}$ specifies the index of the last string in $x_i$ before the divergence point of $x_i$ and $x_j$} \\
      \> If $\tau$ does not define a tree (i.e., the divergence points are inconsistent), skip this $\tau$.\\
      \> Otherwise, for each nonbranching segment $\sigma$ of $\tau$:\\
      \> \> (1) Identify those sequences in $X$ with subsequences present in $\sigma$,\\
      \> \> \> and let $Y_{\sigma}$ denote the corresponding subsequences of the sequences in $X$.\\
      \> \> (2) Compute the multiple sequence alignment of $Y_{\sigma}$ via dynamic programming.\\
      \> \> \> The cost of associating strings $S = \set{s_1, s_2, \ldots, s_t}$ at the same node in the tree\\
      \> \> \> is $\lambda + \sum_i \ed(s_i,\mathsf{median}(S))$.
      Compute $\mathsf{median}(S)$ via dynamic programming.\\
      \> \> (3) Assemble the alignments for each segment into a tree $T_\tau$.\\
      Return the $T_\tau$ of minimum cost, over all values of $\tau$.
    \end{tabbing}
  \end{minipage}}
\caption{The brute-force algorithm for DSSSP.  Classical Multiple Sequence Alignment algorithms are used to align \emph{sequences of letters} (a.k.a.~strings) rather than, as in our case, \emph{sequences of strings.}  As a result, our cost functions are more complex, and take more time to compute. (See Theorem~\ref{thm:dsssp_is_fixed_parameter_tractable}.)}
  \label{fig:brute-force-algorithm}
\end{figure}

\end{document}